\newcommand{\BEAS}{\begin{eqnarray*}}
\newcommand{\EEAS}{\end{eqnarray*}}
\newcommand{\BEA}{\begin{eqnarray}}
\newcommand{\EEA}{\end{eqnarray}}
\newcommand{\BEQ}{\begin{equation}}
\newcommand{\EEQ}{\end{equation}}
\newcommand{\BIT}{\begin{itemize}}
\newcommand{\EIT}{\end{itemize}}
\newcommand{\BNUM}{\begin{enumerate}}
\newcommand{\ENUM}{\end{enumerate}}
\newcommand{\ie}{{\it i.e.}}
\newcommand{\cl}{\mathop{\bf cl}}
\newcommand{\ip}[2]{\langle {#1},{#2} \rangle}
\newcommand{\trace}[1]{\mathop{\bf Tr}\left({#1}\right)}
\newcommand{\rank}[1]{\mathop{\bf Rank}\left({#1}\right)}
\theoremstyle{plain}
\newtheorem{theorem}{Theorem}
\newtheorem{proposition}{Proposition}
\newtheorem{lemma}{Lemma}
\newtheorem{corollary}{Corollary}
\newtheorem{example}{Example}
\theoremstyle{nonumberplain}
\title{\LARGE \bf
Primal robustness and semidefinite cones
}
\author{Seungil You, Ather Gattami, and John C. Doyle% <-this % stops a space
\thanks{This work was supported by Kwanjeong Graduate Fellowship}% <-this % stops a space
\thanks{S. You and J. Doyle are with the Control and Dynamical Systems, California Institute of Technology, Pasadena, CA 91125, USA
		{\tt\small \{syou, doyle\}@caltech.edu}
		A. Gattami is with Ericsson Research, Stockholm, Sweden, {\tt\small ather.gattami@ericsson.com}}
}
\begin{document}

\maketitle
\thispagestyle{empty}
\pagestyle{empty}

%%%%%%%%%%%%%%%%%%%%%%%%%%%%%%%%%%%%%%%%%%%%%%%%%%%%%%%%%%%%%%%%%%%%%%%%%%%%%%%%
\begin{abstract}
This paper reformulates and streamlines the core tools of robust stability and performance for LTI systems using now-standard methods in convex optimization.  In particular, robustness analysis can be formulated directly as a primal convex (semidefinite program or SDP) optimization problem using sets of gramians whose closure is a semidefinite cone.  This allows various constraints such as structured uncertainty to be included directly, and worst-case disturbances and perturbations constructed directly from the primal variables.   Well known results such as the KYP lemma and various scaled small gain tests can also be obtained directly through standard SDP duality.  To readers familiar with robustness and SDPs, the framework should appear obvious, if only in retrospect.  But this is also part of its appeal and should enhance pedagogy, and we hope suggest new research.  There is a key lemma proving closure of a grammian that is also ÔÕobviousÕÕ but our current proof appears unnecessarily cumbersome, and a final aim of this paper is to enlist the help of experts in robust control and convex optimization in finding simpler alternatives.
\end{abstract}

%%%%%%%%%%%%%%%%%%%%%%%%%%%%%%%%%%%%%%%%%%%%%%%%%%%%%%%%%%%%%%%%%%%%%%%%%%%%%%%%
\section{Introduction}
Robust control theory has been an important subject in the control engineering both in theory and practice \cite{Dullerud:2010tc}.
Theoretical developments has been evolved in various flavor, 
but the most modern approach to this subject is the Linear Matrix Inequality (LMI) based approach \cite{boyd1995linear}.

In order to obtain the LMI characterization of system behavior, so called the $\mathcal{S}$-procedure \cite{yakubovich1971s} has been extensively used.
%However, the seminal paper \cite{Rantzer:2011wn} uses an elegant separating hyperplane argument to obtain the famous Kalman--Yakubovich--Popov (KYP) lemma,
%and in the same style of proof strategy, some of the novel results have been obtained by the separation between two sets but the argument has been very technical, see \eg, \cite{paganini1996sets,d1997generalization}.
Motivated by the popularity of LMIs in systems theory, semidefinite programming (SDP) duality has been used to understand such LMIs and control theoretic interpretation of the SDP duals of these LMIS has been reported \cite{balakrishnan2003semidefinite, scherer2006lmi, ebihara2009robust}.
In particular, the dual LMI approach is used to extract the worst case frequency variable and disturbance in \cite{scherer2006lmi, ebihara2009robust}.
However, a recent paper shows that this dual LMI has its own right as a well-defined optimization when it comes to $\mathcal{H}_{\infty}$ analysis \cite{gattami2013simple}.
In  \cite{you2014h}, we also report that the Kalman--Yakubovich--Popov (KYP) lemma \cite{Rantzer:2011wn} is an SDP dual of this optimization problem, which should be obvious to the experts.
An interesting observation in here is that the dual LMI may be another starting point of robustness analysis which does not require $\mathcal{S}$-procedure,
and well known results, such as the KYP lemma can be obtained through SDP duality, \ie, reversing the theoretical developments.

To this end, this paper provides a complete characterization of gramians generated by a linear time invariant (LTI) system.
It turns out the closure of a set of gramians is an intersection of a subspace and a semidefinite cone.
The seminal paper \cite{georgiou2002structure} attempts to obtain similar results on the covariance matrices generated by stochastic disturbances,
but this paper characterizes gramians from {\it{deterministic}} disturbances, which is suitable for existing robustness results.
More importantly, we provide a semidefinite representation of gramians in contrast to the rank constraint in \cite{georgiou2002structure}.
This semidefinite representation allows us to formulate extended $\mathcal{H}_{\infty}$ analysis, where we can directly capture numerous prior information on the disturbance including structural properties, as an SDP.

In addition, the SDP dual of our primal optimization provides the well known LMI characterization of the system behavior.
We exemplify this procedure for the KYP lemma, but the result can be easily extended to more general disturbance setting, and our approach provides an arguably simple proof through standard SDP duality.
In addition, SDP duality also provides the scaled small gain tests for the robust stability verification.
However, our primal formulation provides a specific input-output pair proving that the system is not robustly stable, which is not a trivial task in the scaled small gain test.
This is because the variables in the scaled small gain test does not contain useful input-output information, so if the test fails, it is hard to extract a specific pair that disproves robust stability.
This entire procedure of obtaining LMIs for robustness analysis should be obvious to the experts, which shows a pedagogical benefit of our approach.
We also hope that our new tool opens up a new research direction in robust control theory. 

\subsection{Notation}
$\mathbb{H}^n, \mathbb{H}_+^n, \mathbb{H}_{++}^n$ are sets of $n \times n$ Hermitian, positive semidefinite, positive definite matrices, respectively.
The generalized inequality $X \succeq 0$ means $X \in \mathbb{H}_+$, and $X \succ 0$ means $X \in \mathbb{H}_{++}$.
We use $l_2$ for $l_2[0,\infty)$, the Hilbert space of square summable sequence with the starting index $0$.
The bold Latin letter $\mathbf{x}$ means a sequence in $l_2$.
In addition, for a vector and vector-valued signal, $\|\cdot\|_2$ is the two norm.
For a matrix and linear operator, $\|\cdot\|_F$ is the Frobenius norm,  and
$\rho(A)$ denotes the spectral radius of $A$, $A^*$ denotes a Hermitian/Adjoint operator.

\section{A semidefinite representation of Gramians}
For a signal $\mathbf{u} \in l_2$, we define the gramian $\Lambda : l_2 \rightarrow \mathbb{H}_+$, 
\BEAS
\Lambda(\mathbf{u}) = \sum_{k=0}^{\infty}u_k u_k^*.
\EEAS
Notice that the gramian is well-defined because each entry of the matrix is finite, and $\mathbb{H}_+$ is closed.

For notational convenience, let $\Lambda(\mathbf{u},\mathbf{v}) =  \Lambda\left(\begin{bmatrix}\mathbf{u}\\ \mathbf{v} \end{bmatrix}\right) = \sum_{k=0}^{\infty}\begin{bmatrix} u_k\\v_k\end{bmatrix}\begin{bmatrix} u_k\\v_k\end{bmatrix}^*$, and $\Lambda_N(\mathbf{u})$  be a finite truncation of $\Lambda$, $\Lambda_N(\mathbf{u}) = \sum_{k=0}^{N-1} u_ku_k^*$. 

Suppose we have matrices $A \in \mathbb{C}^{n\times n}$, $B \in \mathbb{C}^{n \times m}$, where $A$ is Schur stable, $\rho(A) < 1$, and
let $\mathbf{M} : l_2^m \rightarrow l_2^n$ be a linear operator such that $\mathbf{x} = \mathbf{M}\mathbf{w}$ if
\begin{align}
x_{k+1} &= Ax_{k} + B w_k \label{eq:state}\\
x_0 &= 0. \label{eq:initial}
\end{align}

In this paper, we consider a set of gramians, $\mathcal{S}$, generated by $\mathbf{M}$,
\BEAS
\mathcal{S} := \left\{V \in \mathbb{H}_+ ~: V = \Lambda(\mathbf{Mw}, \mathbf{w}) \text{ for some $\mathbf{w} \in l_2$}\right\}.
\EEAS
In terms of $\mathbf{x}, \mathbf{w}$, the gramian can be seen as
\BEAS
V =  \begin{bmatrix} X & R\\R^* & W \end{bmatrix} = \begin{bmatrix} \sum_k x_kx_k^* & \sum_k x_kw_k^*\\ \sum_k w_kx_k^* & \sum_kw_kw_k^* \end{bmatrix}.
\EEAS
%in other words, $V$ contains auto and cross correlations of the deterministic signal $\mathbf{x}, \mathbf{w}$.
%The following partition on $V$ will be used in order to compact exposition of our result:
%\BEAS
%V &=&  \begin{bmatrix} \Pi_X(V) & \Pi_R(V)\\ \Pi_R(V)^* & \Pi_W(V)\end{bmatrix},
%\EEAS
%where $\Pi_X(V)$ is $n \times n$ sub-block of $V$, $\Pi_W(V)$ is $m \times m$ sub-block of $V$, and $\Pi_R(V)$ is $n \times m$ sub-block of $V$.
%It is obvious that $\Pi_X(V) = \sum_{k=0}^{\infty} x_kx_k^*$, $\Pi_W(V) = \sum_{k=0}^{\infty}w_kw_k^*$, and $\Pi_R(V) = \sum_{k=0}^{\infty} x_kw_k^*$.
From this definition, we can easily see that the gramian captures various input-output relationships in the system.
For example, 
 $\|\mathbf{w}\|^2_2 = \trace{W}$, and $\|C\mathbf{x} + D\mathbf{w}\|_2^2 = \trace{\begin{bmatrix} C^*C & C^*D\\D^*C & D^*D \end{bmatrix} V}$.

However, for a given matrix $V$, checking $V \in \mathcal{S}$ is not a trivial task, because one should search over the $l_2$ space, an infinite dimensional space, to find a signal $\mathbf{w}$ generating $V$.
Therefore it is desirable to find a convinient way to characterize the set $\mathcal{S}$.
Let us consider the following set
\BEAS
\mathcal{C} := \left\{V \in \mathbb{H}_+ : \begin{bmatrix} A & B \end{bmatrix} V \begin{bmatrix} A^* \\ B^* \end{bmatrix} =  \begin{bmatrix} I & 0 \end{bmatrix} V \begin{bmatrix} I \\ 0 \end{bmatrix}\right\}.
\EEAS
Notice that $\mathcal{C}$ is an intersection of a subspace in $\mathbb{H}$ and a semidefinite cone, therefore $\mathcal{C}$ is a finite dimensional closed, convex cone that is semidefinite programming (SDP) representable.
This means that checking $V \in \mathcal{C}$ can be easily done by a semidefinite programming.

Why do we need this set $\mathcal{C}$?
Is there any relationship between $\mathcal{C}$ and $\mathcal{S}$?
The following proposition shows an interesting observation between these two sets.
\begin{proposition}
The set $\mathcal{S}$ is a subset of $\mathcal{C}$.
\label{prop:easy_direction}
\end{proposition}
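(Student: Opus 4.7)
The plan is to show that any gramian $V \in \mathcal{S}$ satisfies the quadratic identity defining $\mathcal{C}$ by a direct computation that exploits the state recursion \eqref{eq:state} and the zero initial condition \eqref{eq:initial}. Expanding the left-hand side of the defining equation of $\mathcal{C}$ in block form gives
\[
\begin{bmatrix} A & B \end{bmatrix} V \begin{bmatrix} A^* \\ B^* \end{bmatrix} = A X A^* + A R B^* + B R^* A^* + B W B^*,
\]
which, after substituting $X = \sum_k x_k x_k^*$, $R = \sum_k x_k w_k^*$, $W = \sum_k w_k w_k^*$, should combine into the single sum $\sum_{k=0}^{\infty} (A x_k + B w_k)(A x_k + B w_k)^*$.

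The key step is then to invoke the state equation $A x_k + B w_k = x_{k+1}$, so the sum reduces to $\sum_{k=0}^{\infty} x_{k+1} x_{k+1}^*$. Re-indexing and using $x_0 = 0$ yields
\[
\sum_{k=0}^{\infty} x_{k+1} x_{k+1}^* = \sum_{k=1}^{\infty} x_k x_k^* = \sum_{k=0}^{\infty} x_k x_k^* = X = \begin{bmatrix} I & 0 \end{bmatrix} V \begin{bmatrix} I \\ 0 \end{bmatrix},
\]
which is exactly the identity defining $\mathcal{C}$. Combined with $V \succeq 0$, which is automatic from the definition of $\Lambda$, this gives $V \in \mathcal{C}$.

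To make this rigorous I would first argue that $\mathbf{x} = \mathbf{M}\mathbf{w}$ lies in $l_2$: since $A$ is Schur stable ($\rho(A) < 1$) and $\mathbf{w} \in l_2$, the impulse response of $\mathbf{M}$ is absolutely summable and $\|\mathbf{x}\|_2 \le \|\mathbf{M}\|\,\|\mathbf{w}\|_2 < \infty$. This justifies that each of the block entries $X, R, W$ is a well-defined matrix (entrywise absolutely convergent by Cauchy–Schwarz), so term-by-term manipulation of the sums and the distribution of $A$ and $B$ across the sums are legitimate. With these convergence points in hand the computation collapses to the identity above.

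The main obstacle here is really only bookkeeping: ensuring the rearrangement of the four double sums into $\sum_k x_{k+1} x_{k+1}^*$ is justified, and handling the index shift cleanly using $x_0 = 0$. Everything else is an algebraic identity forced by $x_{k+1} = Ax_k + Bw_k$, so this direction should be the ``easy'' half of the eventual characterization $\overline{\mathcal{S}} = \mathcal{C}$; the hard direction (showing every $V \in \mathcal{C}$ can be approximated by some $\Lambda(\mathbf{Mw},\mathbf{w})$) is what presumably requires the cumbersome closure argument alluded to in the abstract.
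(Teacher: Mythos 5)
Your proposal is correct and follows essentially the same route as the paper's proof: use $x_{k+1} = Ax_k + Bw_k$ to identify $\begin{bmatrix} A & B \end{bmatrix} V \begin{bmatrix} A^* \\ B^* \end{bmatrix}$ with $\sum_{k} x_{k+1}x_{k+1}^*$, then shift the index using $x_0 = 0$ to recover $X = \begin{bmatrix} I & 0 \end{bmatrix} V \begin{bmatrix} I \\ 0 \end{bmatrix}$. The only difference is cosmetic (you expand the blocks before recombining, and you spell out the $\mathbf{x} \in l_2$ convergence justification that the paper leaves implicit), so no further comment is needed.
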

\begin{proof}
Suppose $V \in \mathcal{S}$.
This means there exists a signal $\mathbf{w} \in l_2$ such that $V = \Lambda(\mathbf{Mw}, \mathbf{w})$.
Let $\mathbf{x} = \mathbf{Mw}$. 
Then, since $x_{k+1} = Ax_k + Bw_k$, we have
\BEAS
x_{k+1}x_{k+1}^* &=& \begin{bmatrix} A & B \end{bmatrix}  \begin{bmatrix}x_k\\w_k\end{bmatrix}\begin{bmatrix}x_k\\w_k\end{bmatrix}^* \begin{bmatrix} A^* \\ B^* \end{bmatrix}
\EEAS
By taking an infinite sum, we have
\BEAS
\sum_{k=0}^{\infty} x_{k+1}x_{k+1}^* &=& \sum_{k=0}^{\infty} \begin{bmatrix} A & B \end{bmatrix}  \begin{bmatrix}x_k\\w_k\end{bmatrix}\begin{bmatrix}x_k\\w_k\end{bmatrix}^* \begin{bmatrix} A^* \\ B^* \end{bmatrix}\\
&=&  \begin{bmatrix} A & B \end{bmatrix} V\begin{bmatrix} A^* \\ B^* \end{bmatrix}.
\EEAS
Moreover, since $x_0 = 0$, $\sum_{k=0}^{\infty} x_{k+1}x_{k+1}^* = \sum_{k=0}^{\infty} x_kx_k^*$.
Using this fact with $\begin{bmatrix} I _n& 0_{n,m} \end{bmatrix} V \begin{bmatrix} I _n& 0_{n,m} \end{bmatrix}^* = \sum_{k=0}^{\infty}x_kx_k^*$, we can conclude that 
$V \in \mathcal{C}$.
\end{proof}

The above observation is somewhat trivial due to the convergence.
But the immediate, and important question arises. Is $\mathcal{C}$ equal to $\mathcal{S}$?
If this is true, we can replace a complicated set $\mathcal{S}$ by a semidefinite representable convex cone $\mathcal{C}$.
For an optimization with $\mathcal{S}$, this has a dramatic impact: any optimization involving $\mathcal{S}$ becomes a convex program.
Unfortunately, this is not the case.
%Let us look at the following example.
\begin{example}
Let $A = a$, $B = 1$, where $|a| < 1$, and $a \neq 0$.
Consider $V = \begin{bmatrix} \frac{1}{1-a}\\ 1\end{bmatrix}\begin{bmatrix} \frac{1}{1-a}\\ 1\end{bmatrix}^*$.
It can be easily checked $V \in \mathcal{C}$, and $\rank{V} = 1$, but $V \not\in \mathcal{S}$.
%
%Now consider $\mathbf{w} \in l_2$.
%Without loss of generality, we can assume that $w_0 = 1$.
%Then,
%\BEAS
%\Lambda(\mathbf{w})
%&\succeq& \begin{bmatrix} x_0\\w_0\end{bmatrix}\begin{bmatrix} x_0\\w_0\end{bmatrix}^* + \begin{bmatrix} x_1\\w_1\end{bmatrix}\begin{bmatrix} x_1\\w_1\end{bmatrix}^*\\
%&=& \begin{bmatrix} 0 \\1\end{bmatrix}\begin{bmatrix} 0\\1\end{bmatrix}^*+ \begin{bmatrix} a \\w_1\end{bmatrix}\begin{bmatrix} a\\w_1\end{bmatrix}^*.
%\EEAS
%Since $\left\{\begin{bmatrix} 0 \\1\end{bmatrix},  \begin{bmatrix} a \\w_1\end{bmatrix}\right\}$ is linearly independent for any $w_1$, we can conclude that $\Lambda(\mathbf{w})$ is full rank.
%
%This means $\mathcal{S}$ only contains matrices with rank zero or two. Since $V$ is rank one, $V \not\in \mathcal{S}$.
\end{example}

However, we have a surprising fact: $\mathcal{C}$ is the {\it{closure}} of $\mathcal{S}$.
Although the idea of the proof is simple, but our current proof goes through a tedious analysis to apply the $\epsilon-\delta$ style argument.
So we present the main result here, and relegate the sketch of the proof to the appendix.
But we want to emphasize that our proof is {\it{constructive}}, therefore given $V$, we can find a signal $\mathbf{w}$ that approximates $V$ arbitrarily close.

%
%The proof idea is simple, but our current proof have to go through a tedious technical analysis to apply the $\epsilon-\delta$ style argument.
%We strongly believe that the proof can be greatly simplified, so we welcome other researchers to find a more elegant proof.
%Here is our main result, but we present the proof in the later section.

\begin{lemma}
Suppose $V = \begin{bmatrix} X & R\\R^* & W \end{bmatrix} \in \mathcal{C}$.
Then for all $\varepsilon > 0$, there exists $\mathbf{w} \in l_2$ with finite number of non-zero entries such that
\begin{align}
&\|\Lambda(\mathbf{Mw},\mathbf{w}) - V\|_F < \varepsilon \label{eq:small-error-full}\\
&\Lambda(\mathbf{w}) = W  \label{eq:exactw-full}
\end{align}
\vspace{-0.5cm}
\label{prop:approximation}
\end{lemma}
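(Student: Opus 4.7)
The plan is to construct $\mathbf{w}$ as a concatenation of $p$ finite-length ``bursts,'' each built from one column of a square-root factor of $V$, and to space the bursts far apart so that residual state decays between them. First, factor the positive semidefinite matrix $V = ZZ^*$ with $Z = \begin{bmatrix}Z_x\\ Z_w\end{bmatrix} \in \mathbb{C}^{(n+m)\times p}$. The defining equation of $\mathcal{C}$ becomes $(AZ_x + BZ_w)(AZ_x + BZ_w)^{*} = Z_x Z_x^{*}$, so the two factors $T := AZ_x + BZ_w$ and $Z_x$ share left singular vectors and singular values. Matching singular value decompositions then produces a unitary $U \in \mathbb{C}^{p\times p}$ with $AZ_x + BZ_w = Z_x U$. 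This identity is the engine of the construction.

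Fix integers $N,M$ to be chosen later and set $K_j = (j-1)(N+M)$ for $j = 1,\dots,p$. Define
\BEAS
w_k = \begin{cases} \tfrac{1}{\sqrt{N}}\, Z_w\, U^{\,k-K_j}\, e_j, & k \in [K_j,K_j+N),\\ 0, & \text{otherwise}, \end{cases}
\EEAS
where $e_j \in \mathbb{C}^p$ is the $j$-th standard basis vector. Since the bursts are time-disjoint, the input gramian separates and collapses:
\BEAS
\Lambda(\mathbf{w}) = \tfrac{1}{N}\, Z_w \sum_{j=1}^{p}\sum_{t=0}^{N-1} U^{t} e_j e_j^{*} U^{*t}\, Z_w^{*} = Z_w Z_w^{*} = W,
\EEAS
using $\sum_{j} e_j e_j^{*} = I_p$ and unitarity of $U$. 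Thus \eqref{eq:exactw-full} holds exactly. A one-step induction inside window $j$ then gives, for $t\in[0,N]$,
\BEAS
x_{K_j+t} = \tfrac{1}{\sqrt{N}}\bigl(Z_x U^{t} - A^{t} Z_x\bigr) e_j + A^{t}\xi_j,
\EEAS
where $\xi_j := x_{K_j}$ is the state inherited from previous bursts; since the state decays by $A^M$ during the silent gap and $\xi_1 = 0$, one has $\|\xi_j\| = O(\|A^{M}\|)$ uniformly in $j$.

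Substituting this expression into $\sum_k x_k x_k^{*}$ and $\sum_k x_k w_k^{*}$, and again invoking unitarity of $U$ and $\sum_j e_j e_j^{*} = I_p$, the dominant terms collapse exactly to $X$ and $R$. The remainder splits into three types of errors, each controllable. Within-window cross terms, of the form $\tfrac{1}{N}\sum_{t=0}^{N-1} Z_x U^{t} Z_x^{*}(A^{*})^{t}$ and $\tfrac{1}{N}\sum_{t=0}^{N-1} A^{t} X (A^{*})^{t}$, are $O(1/N)$ because $\sum_t \|A^{t}\| < \infty$ (since $\rho(A)<1$). Inter-window residual terms are $O(\|A^{M}\|)$. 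The tail $\sum_{k\geq K_p + N} x_k x_k^{*}$ also decays geometrically and is $O(\|A^{M}\|)$. Given $\varepsilon>0$, choose $N$ first, large enough to make the within-window errors below $\varepsilon/2$ in Frobenius norm, then choose $M$ large enough to make the inter-window errors below $\varepsilon/2$. This establishes \eqref{eq:small-error-full}.

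The main obstacle is the combinatorial bookkeeping: expanding $x_k x_k^{*}$ with $x_k = \textbf{(burst)} + \textbf{(residual)}$ produces a matrix of cross terms indexed by pairs of bursts plus a tail, and each family must be separately estimated with the right dependence on $N$ and $M$. This is precisely the ``tedious $\varepsilon$--$\delta$'' step the authors highlight. A conceptually cleaner route is to run all $p$ columns in parallel in a lifted (matrix-valued) signal, where $AZ_x + BZ_w = Z_x U$ drives an exact finite-horizon identity; but projecting back to a scalar signal with $x_0 = 0$, as opposed to the lifted fixed point $Z_x \neq 0$, reintroduces exactly the residuals handled above.
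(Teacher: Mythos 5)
Your proposal is correct, and its analytic core is the same as the paper's: build truncated, $1/\sqrt{N}$-normalized steady-state trajectories, concatenate them with silent gaps of length $M$ so the state decays, and bound the error by $O(1/N)+O(\|A^{M}\|)$ using $\sum_k\|A^k\|<\infty$. The genuine difference is in the decomposition step. The paper first proves that every $V\in\mathcal{C}$ splits as a sum of rank-one matrices \emph{each lying in} $\mathcal{C}$ (its Proposition on rank-one decomposition, obtained by iterating Rantzer's $FF^*=GG^*\Leftrightarrow F=GU$ lemma), gets a scalar phase $e^{j\theta}$ for each piece, and then runs an induction on the number of pieces. You instead take a single square-root factor $V=ZZ^*$ and apply the Rantzer-type lemma once to the $n\times p$ matrices $AZ_x+BZ_w$ and $Z_x$, obtaining one $p\times p$ unitary $U$; your $p$ bursts $Z_wU^te_j$ are ``matrix sinusoids'' whose gramian contributions sum to $ZZ^*$ exactly via $\sum_j e_je_j^*=I_p$. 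This buys you two things: you bypass the rank-one decomposition proposition entirely (diagonalizing $U$ would recover it, but you never need to), and you replace the induction with a single uniform estimate over $p$ windows, which is exactly the kind of streamlining the authors solicit. Two small points to tighten: the tail $\sum_{k\ge K_p+N}x_kx_k^*$ is actually $O(1/N)$ rather than $O(\|A^M\|)$ since $\|x_{K_p+N}\|=O(1/\sqrt{N})+O(\|A^M\|)$ (harmless either way, as both parameters are taken large), and the uniform bound $\|\xi_j\|=O(\|A^M\|)$ deserves the one-line recursion $\|\xi_{j+1}\|\le\|A^M\|\bigl(2\|Z_x\|_F/\sqrt{N}+\|A^N\|\|\xi_j\|\bigr)$ to justify ``uniformly in $j$.'' With those filled in, the bookkeeping you describe closes the argument.
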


The following is immediate.

\begin{lemma}
$\mathcal{C} = \cl{\mathcal{S}}$.
\label{prop:closure}
\end{lemma}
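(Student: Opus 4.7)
The plan is to establish the two inclusions $\cl{\mathcal{S}} \subseteq \mathcal{C}$ and $\mathcal{C} \subseteq \cl{\mathcal{S}}$ separately, using Proposition~\ref{prop:easy_direction} for the first and Lemma~\ref{prop:approximation} for the second. Since all the real content lives in Lemma~\ref{prop:approximation}, the argument should essentially be a one-liner in each direction.

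For the first inclusion, I would start from Proposition~\ref{prop:easy_direction}, which says $\mathcal{S} \subseteq \mathcal{C}$. The set $\mathcal{C}$ was introduced as the intersection of a linear subspace of $\mathbb{H}$ (defined by the equality $\begin{bmatrix} A & B \end{bmatrix} V \begin{bmatrix} A^* \\ B^* \end{bmatrix} = \begin{bmatrix} I & 0 \end{bmatrix} V \begin{bmatrix} I \\ 0 \end{bmatrix}$) with the semidefinite cone $\mathbb{H}_+$, hence it is closed. Taking closures of both sides of $\mathcal{S} \subseteq \mathcal{C}$ and using that $\mathcal{C}$ equals its own closure gives $\cl{\mathcal{S}} \subseteq \mathcal{C}$.

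For the reverse inclusion, I would pick an arbitrary $V \in \mathcal{C}$ and show that every Frobenius-norm neighborhood of $V$ meets $\mathcal{S}$. This is precisely what Lemma~\ref{prop:approximation} provides: for each $\varepsilon > 0$ there exists a finitely-supported $\mathbf{w} \in l_2$ with $\|\Lambda(\mathbf{Mw},\mathbf{w}) - V\|_F < \varepsilon$. Since $\Lambda(\mathbf{Mw},\mathbf{w}) \in \mathcal{S}$ by the very definition of $\mathcal{S}$, we have exhibited a sequence in $\mathcal{S}$ converging to $V$ in Frobenius norm, so $V \in \cl{\mathcal{S}}$. Combining the two inclusions yields $\mathcal{C} = \cl{\mathcal{S}}$.

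There is no real obstacle here; the only thing to be slightly careful about is to invoke closedness of $\mathcal{C}$ explicitly (so that the first inclusion is valid) and to note that the approximating matrices constructed in Lemma~\ref{prop:approximation} genuinely lie in $\mathcal{S}$ (which they do, because $\mathbf{w}$ is an element of $l_2$). This is why the authors label the statement as ``immediate'' after Lemma~\ref{prop:approximation}.
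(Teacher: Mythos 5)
Your proposal is correct and follows exactly the paper's own argument: $\cl{\mathcal{S}} \subseteq \mathcal{C}$ from Proposition~\ref{prop:easy_direction} together with the closedness of $\mathcal{C}$, and $\mathcal{C} \subseteq \cl{\mathcal{S}}$ from Lemma~\ref{prop:approximation}. The only difference is that you spell out the details (why $\mathcal{C}$ is closed, why the approximants lie in $\mathcal{S}$) that the paper leaves implicit.
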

\begin{proof}
From Proposition \ref{prop:easy_direction}, $\mathcal{S} \subset \mathcal{C}$, which shows $\cl{\mathcal{S}} \subset \mathcal{C}$ since $\mathcal{C}$ is closed.
From Lemma \ref{prop:approximation}, $\mathcal{C} \subset \cl{\mathcal{S}}$.
Therefore, $\mathcal{C} = \cl{\mathcal{S}}$.
\end{proof}

The above two results are key lemmas in this paper.
Many robustness analysis  can be stated as an optimization problem with gramians, that has a linear objective function.
Therefore, a set of gramians, $\mathcal{S}$, can be replaced by $\mathcal{C}$ without any conservatism, and more importantly the resulting problem becomes an SDP.

%The set $\mathcal{S}$ will be used to formulate optimization problems for robustness analysis,
%
%The set $\mathcal{S}$ can be used to formulate optimization problems for robustness analysis.
%In addition, from the Proposition \ref{prop:approximation} and \ref{prop:closure}, the set $\mathcal{S}$ can be replaced by the SDP representable cone $\mathcal{C}$ without any conservatism if the objective function is continuous.

%we can replace $\mathcal{S}$ by the SDP representable convex cone $\mathcal{C}$ without any conservatism. 
%Using the Proposition \ref{prop:approximation} and \ref{prop:closure}, we can formulate interesting optimization problems in the control and systems theory in terms of $\mathcal{S}$, then replace this set by the SDP representable convex cone $\mathcal{C}$ without any conservatism.
%In fact, this procedure is very closely related to the {\it{lossless}} $\mathcal{S}$-procedure \cite{yakubovich1971s,megretsky1991s}, which is a Lagrangian dual of our approach.

Before concluding this section, we present a connection between the controllability of $(A,B)$ and the geometric property of the SDP cone $\mathcal{C}$.
%The following fact turns out to be very important when we investigate a non-strict inequality version of Kalman--Yakubovich--Popov (KYP) lemma.

\begin{proposition}
There exists  $V \in \mathcal{C} \cap \mathbb{H}_{++}$ if and only if $(A,B)$ is controllable.
\label{prop:nonempty}
\end{proposition}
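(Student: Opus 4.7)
The plan is to prove both directions using standard facts about Schur-stable, controllable pairs, together with the block structure of the defining equation of $\mathcal{C}$. Write $V = \begin{bmatrix} X & R \\ R^* & W \end{bmatrix}$, so the constraint defining $\mathcal{C}$ reads $AXA^* + ARB^* + BR^*A^* + BWB^* = X$.

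For the ``if'' direction, assume $(A,B)$ is controllable. I would construct an explicit $V \succ 0$ by choosing the block-diagonal candidate $R=0$, $W=I$, and letting $X$ be the unique solution of the discrete Lyapunov equation $X = AXA^* + BB^*$. Since $A$ is Schur stable and $(A,B)$ is controllable, the standard controllability gramian $X = \sum_{k=0}^{\infty} A^k B B^* (A^*)^k$ converges and is positive definite. Then $V = \diag(X, I) \succ 0$ satisfies the defining equation of $\mathcal{C}$ by construction.

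For the ``only if'' direction, I would argue by contrapositive using the Popov--Belevitch--Hautus test. Suppose $(A,B)$ is not controllable; then there exists a nonzero vector $v$ and $\lambda \in \mathbb{C}$ with $v^*A = \lambda v^*$ and $v^*B = 0$. Schur stability forces $|\lambda| < 1$. Sandwiching the $\mathcal{C}$-constraint between $v^*$ and $v$, every term containing $B$ or $B^*$ vanishes because $B^*v = 0$, leaving $|\lambda|^2 v^* X v = v^* X v$, hence $(1 - |\lambda|^2) v^* X v = 0$ and therefore $v^* X v = 0$. On the other hand, if $V \in \mathbb{H}_{++}$, then its principal submatrix $X$ is also positive definite, so $v^* X v > 0$ for $v \neq 0$, a contradiction.

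I don't anticipate a serious obstacle here; both directions reduce to one-line applications of standard tools (the Lyapunov equation for controllable Schur pairs, and PBH). The only minor subtlety is recognizing that the positive definiteness of the full $V$ is inherited by the $X$ block (a standard fact about principal submatrices of Hermitian positive-definite matrices), which is what makes the PBH argument close cleanly.
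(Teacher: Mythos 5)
Your proof is correct, and the two directions split as follows. The ``if'' direction is essentially identical to the paper's: both construct $V = \begin{bmatrix} W_c & 0 \\ 0 & I\end{bmatrix}$ with $W_c$ the controllability gramian, positive definite by stability plus controllability. The ``only if'' direction is where you genuinely diverge. The paper argues forward: from $V \in \mathcal{C}\cap\mathbb{H}_{++}$ it factors $W = TT^*$, sets $\tilde B = BT$, $K = T^{-1}R^*X^{-1}$, rewrites the constraint as a Lyapunov-type equation for the closed-loop pair $(A+\tilde B K, \tilde B)$ with solution $X \succ 0$, and then transfers controllability back to $(A,B)$ via invariance under feedback and invertible input transformations. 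You instead argue by contrapositive with the Popov--Belevitch--Hautus test: an uncontrollable left eigenpair $(v,\lambda)$ with $v^*B = 0$ annihilates every $B$-containing term in the constraint, leaving $(1-|\lambda|^2)v^*Xv = 0$ with $|\lambda|<1$, contradicting $X \succ 0$. Your route is arguably cleaner: it avoids the inversions of $X$ and $T$ (and hence the implicit positivity assumptions the paper glosses over --- note the paper even writes $\mathbb{H}_+$ where it needs $\mathbb{H}_{++}$), and it sidesteps the somewhat delicate claim that a positive definite solution of the transformed Lyapunov equation certifies controllability of the closed-loop pair. What the paper's approach buys in exchange is a constructive feedback interpretation of the blocks $R$ and $W$ ($K$ as a state feedback, $T$ as an input transformation), which is thematically aligned with the rest of the paper; your PBH argument is shorter but purely a certificate of impossibility.
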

\begin{proof}
Suppose $(A,B)$ is controllable.
Since $A$ is stable, the controllability gramian $W_c$
\BEAS
AW_cA^* - W_c + BB^* = 0
\EEAS
is positive definite.
Let
\BEAS
V = \begin{bmatrix} W_c & 0 \\ 0 & I \end{bmatrix},
\EEAS
then $V \in \mathcal{C} \cap \mathbb{H}_{++}$.

Now suppose there exists $V \in \mathcal{C} \cap \mathbb{H}_+$.
Since $V \in \mathcal{C}$,
\BEAS
AXA^* - X + BR^*A^* + ARB^* + BWB^* = 0.
\EEAS
Let $W = TT^*$, and $\tilde{B} = BT$, $K = T^{-1}R^*X^{-1}$.
Then,
\BEAS
(A+\tilde{B}K)X(A+\tilde{B}K)^* - X + \tilde{B}\tilde{B}^* = 0.
\EEAS
Since $X \succ 0$, $(A+\tilde{B}K, \tilde{B})$ is controllable, and it is easy to check that $(A,B)$ is controllable.
\end{proof}

Now we use all these results to prove well-known results in robust control theory which shows the effectiveness of our new, primal approach.

\section{$\mathcal{H}_{\infty}$ analysis and the KYP lemma}
\subsection{$\mathcal{H}_{\infty}$ analysis}
In $\mathcal{H}_{\infty}$ analysis, we would like to find  the worst-case disturbance that maximizes the output norm.
Specifically, let $z_{k} = Cx_{k} + Dw_{k}$. Then we want to solve
\begin{equation}
\begin{aligned}
\mu_{\infty} := & \underset{\mathbf{w},\mathbf{x},\mathbf{z}} {\text{maximize}}
& & \|\mathbf{z}\|_2^2\\
&\text{subject to}
&& x_{k+1} = Ax_k + Bw_k, \qquad x_0 = 0\\
&&& \|\mathbf{w}\|_2^2 = 1.
\end{aligned}
\label{eq:original-h-infinity}
\end{equation}
The optimal value of \eqref{eq:original-h-infinity} is the square of the $\mathcal{H}_{\infty}$ norm of the system.

Let $ \Lambda(\mathbf{Mw}, \mathbf{w}) = \begin{bmatrix} X & R\\R^* & W\end{bmatrix}$, where $X \in \mathbb{H}^n$, $W \in \mathbb{H}^m$.
Notice that
\BEAS
\|\mathbf{z}\|^2_2 &=&\trace{\begin{bmatrix}C^*C & C^*D\\D^*C&D^*D\end{bmatrix}\begin{bmatrix} X & R\\R^* & W\end{bmatrix}}\\
\|\mathbf{w}\|^2_2 &=& \trace{W}.
\EEAS

This shows that the optimization \eqref{eq:original-h-infinity} is equivalent to
\begin{equation}
\begin{aligned}
& \underset{X,R,W} {\text{maximize}}
& & \trace{\begin{bmatrix}C^*C & C^*D\\D^*C&D^*D\end{bmatrix}\begin{bmatrix} X & R\\R^* & W\end{bmatrix}}\\
&\text{subject to}
&& \begin{bmatrix} X & R\\R^* & W\end{bmatrix} \in \mathcal{S}\\
&&& \trace{W} = 1.
\end{aligned}
\label{eq:nonconvex-h-infinity}
\end{equation}
%in a sense that the optimal value of \eqref{eq:original-h-infinity} is equal to the optimal value of \eqref{eq:nonconvex-h-infinity}.

However the above problem is hard to solve because the feasible set $\mathcal{S}$ is involved with an infinite dimensional, $l_2$ space.
Using Lemma \ref{prop:approximation} and \ref{prop:closure}, we can replace $\mathcal{S}$ by $\mathcal{C}$ which results in an SDP that computes the  $\mathcal{H}_{\infty}$ norm of the system.

\begin{proposition}
Define the set $\mathcal{F} = \{V=\begin{bmatrix} X & R\\R^* & W\end{bmatrix} \in \mathbb{H}:\trace{W} = 1\}$.
Then $\cl({\mathcal{S} \cap \mathcal{F}}) = \mathcal{C} \cap \mathcal{F}$.
\label{prop:h-infinity}
\end{proposition}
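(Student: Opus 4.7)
The plan is to prove the two inclusions separately, with the easy direction following from general topology and the harder direction being a direct application of Lemma \ref{prop:approximation}.

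For the inclusion $\cl(\mathcal{S} \cap \mathcal{F}) \subset \mathcal{C} \cap \mathcal{F}$, I would first note that $\mathcal{F}$ is an affine subspace of $\mathbb{H}$ (defined by a single linear equality $\trace{W}=1$), hence closed, and $\mathcal{C}$ is closed as stated earlier. By Proposition \ref{prop:easy_direction}, $\mathcal{S} \cap \mathcal{F} \subset \mathcal{C} \cap \mathcal{F}$, and since the right-hand side is a closed set, taking closures preserves the inclusion.

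For the harder inclusion $\mathcal{C} \cap \mathcal{F} \subset \cl(\mathcal{S} \cap \mathcal{F})$, I would take any $V = \begin{bmatrix} X & R \\ R^* & W \end{bmatrix} \in \mathcal{C} \cap \mathcal{F}$ and apply Lemma \ref{prop:approximation}: for every $\varepsilon > 0$ there is $\mathbf{w} \in l_2$ (with finitely many nonzero entries) such that $\|\Lambda(\mathbf{Mw},\mathbf{w}) - V\|_F < \varepsilon$ and, crucially, $\Lambda(\mathbf{w}) = W$ exactly. The exactness in the second conclusion is what makes this proposition work: it forces the lower-right block of $\Lambda(\mathbf{Mw},\mathbf{w})$ to equal $W$, so its trace equals $\trace{W} = 1$, and hence $\Lambda(\mathbf{Mw},\mathbf{w}) \in \mathcal{F}$. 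Since $\Lambda(\mathbf{Mw},\mathbf{w}) \in \mathcal{S}$ by definition, the approximant lies in $\mathcal{S} \cap \mathcal{F}$, which shows $V \in \cl(\mathcal{S} \cap \mathcal{F})$.

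There is no substantial obstacle here: the statement is really a corollary of Lemma \ref{prop:approximation}, and the whole point of condition \eqref{eq:exactw-full} in that lemma (as opposed to only asking for approximation of $W$) is precisely to preserve hyperplane-type constraints on $W$ such as the normalization $\trace{W}=1$ used in the $\mathcal{H}_\infty$ formulation \eqref{eq:nonconvex-h-infinity}. Combining the two inclusions yields equality.
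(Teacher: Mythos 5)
Your proof is correct and follows essentially the same route as the paper's: the easy inclusion from closedness of $\mathcal{C}$ and $\mathcal{F}$, and the reverse inclusion by invoking Lemma \ref{prop:approximation} and using the exact equality $\Lambda(\mathbf{w}) = W$ to guarantee the approximant stays in $\mathcal{F}$. Your added remark that condition \eqref{eq:exactw-full} is precisely what preserves the normalization $\trace{W}=1$ is a fair observation, but the argument itself matches the paper's.
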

\begin{proof}
Since $\cl{\mathcal{S}} = \mathcal{C}$, and $\cl{\mathcal{F}} = \mathcal{F}$, we have $\cl{(\mathcal{S} \cap \mathcal{F})} \subset \mathcal{C} \cap \mathcal{F}$.
Now suppose $V \in \mathcal{C} \cap \mathcal{F}$.
From Lemma \ref{prop:approximation}, for any $\epsilon > 0$, there exists $\mathbf{w} \in l_2$ such that
$\Lambda(\mathbf{w}) = 1$, and $\|V-\Lambda(\mathbf{Mw}, \mathbf{w})\|_F < \epsilon$.
Since $\Lambda(\mathbf{Mw}, \mathbf{w}) \in \mathcal{S} \cap \mathcal{F}$, we can conclude that $V \in \cl({\mathcal{S} \cap \mathcal{F}})$.
\end{proof}

Therefore, from the continuity of $\trace{\cdot}$, we can replace $\mathcal{S}$ in \eqref{eq:nonconvex-h-infinity} by $\mathcal{C}$ without any conservatism, and this is the main reason why we can compute the $\mathcal{H}_{\infty}$ norm using an SDP.

\begin{equation}
\begin{aligned}
& \underset{X,R,W} {\text{maximize}}
& & \trace{\begin{bmatrix}C^*C & C^*D\\D^*C&D^*D\end{bmatrix}\begin{bmatrix} X & R\\R^* & W\end{bmatrix}}\\
&\text{subject to}
&& X =  \begin{bmatrix} A & B \end{bmatrix}\begin{bmatrix} X & R\\R^* & W\end{bmatrix}\begin{bmatrix} A^*\\ B^* \end{bmatrix}\\
&&& \begin{bmatrix} X & R\\R^* & W\end{bmatrix}\succeq0, \trace{W} = 1.
\end{aligned}
\label{eq:convex-h-infinity}
\end{equation}
Clearly, the above optimization is an SDP that can be solved via a standard SDP solver \cite{toh1999sdpt3}, and has the same form as \cite{gattami2013simple, you2014h}.
More importantly, without going through an additional proof, we can easily check the equivalence between the finite-dimensional SDP \eqref{eq:convex-h-infinity} and an infinite-dimensional optimization \eqref{eq:nonconvex-h-infinity}.
Notice that after obtaining the optimal solution of \eqref{eq:convex-h-infinity}, we can construct a signal $\mathbf{w}$ that asymptotically achieves the optimal value using the proof of Lemma \ref{prop:approximation}.
This direct formulation approach will be repeated through this paper, and  our Lemma \ref{prop:approximation} provides a elementary, yet elegant approach to the classical problems in robust control theory.

Using SDP duality, we can expand our understanding of \eqref{eq:convex-h-infinity}.
The following is the SDP dual of \eqref{eq:convex-h-infinity}:
%The pioneering work \cite{balakrishnan2003semidefinite} show that the importance of the SDP duality in the systems theory, and obtain the dual of well-known LMIs.
%Since we provide a well-defined optimization using our set $\mathcal{C}$, 
%we can claim that those dual of LMIs are in fact the primal optimization naturally arises in the direct formulation, and therefore we provide a new perspective on those results.
%Here is the SDP dual problem of \eqref{eq:convex-h-infinity}
\begin{align}
& \underset{\lambda \in \mathbb{R}, P \in \mathbb{H}} {\text{minimize}}
& & \lambda \label{eq:convex-h-infinity-dual}\\
&\text{subject to}
&& \begin{bmatrix} A & B\\C & D\end{bmatrix}^*\begin{bmatrix} P & 0\\0 & I \end{bmatrix}\begin{bmatrix} A & B\\C & D\end{bmatrix}
- \lambda \begin{bmatrix} 0 & 0\\ 0 & I_m \end{bmatrix} \preceq 0. \nonumber
\end{align}
which is the same problem derived from the KYP lemma.
Since we provide a well-defined optimization using our set $\mathcal{C}$, we can claim that the KYP lemma is the dual of $\mathcal{H}_{\infty}$ analysis.
%However, since we provide a optimization using our set $\mathcal{C}$, 
%Notice that since the primal problem \eqref{eq:convex-h-infinity} has a trivial feasible point $V = 0$, which provides an objective $\trace{MV} = 0$, we can conclude that the above dual problem \eqref{eq:convex-h-infinity-dual} has $0$ as a lower bound from the weak duality.
In addition, we can easily check that strong duality holds between \eqref{eq:convex-h-infinity} and \eqref{eq:convex-h-infinity-dual} because the dual program \eqref{eq:convex-h-infinity-dual} is strictly feasible.
%
%However, unlike linear program, SDP may have non-zero duality gap.
%For $\mathcal{H}_{\infty}$ analysis, the dual problem \eqref{eq:convex-h-infinity-dual} is strictly feasible, therefore from the Conic duality theorem, the duality gap is zero, and
%the primal solution is attained.
\begin{proposition}
\eqref{eq:convex-h-infinity-dual} is strictly feasible.
\label{prop:strictdual}
\end{proposition}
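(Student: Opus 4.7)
The plan is to exhibit a concrete pair $(P,\lambda)$ realizing strict inequality in \eqref{eq:convex-h-infinity-dual}. After expanding the block product, this LMI is of the standard KYP form
\begin{equation*}
\begin{bmatrix} A^*PA - P + C^*C & A^*PB + C^*D \\ B^*PA + D^*C & B^*PB + D^*D - \lambda I \end{bmatrix} \prec 0,
\end{equation*}
so it suffices to produce one Hermitian $P$ and one scalar $\lambda$ making this block inequality strict.

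First, I would choose $P$ to force the $(1,1)$ block strictly negative, leveraging the hypothesis that $A$ is Schur stable. Since $\rho(A) < 1$, the discrete Lyapunov equation $A^*PA - P = -(C^*C + I)$ admits the unique Hermitian positive semidefinite solution $P = \sum_{k=0}^{\infty}(A^*)^k(C^*C + I)A^k$. For this $P$ the $(1,1)$ block evaluates to $-I$, which is strictly negative definite.

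With $P$ fixed in this way, I would close the argument via the Schur complement lemma. The block LMI is strict if and only if the $(1,1)$ block is $\prec 0$ (already established) together with
\begin{equation*}
B^*PB + D^*D - \lambda I + (B^*PA + D^*C)(A^*PB + C^*D) \prec 0.
\end{equation*}
Every term on the left other than $-\lambda I$ is a fixed Hermitian matrix, so any $\lambda$ exceeding the maximum eigenvalue of the $\lambda$-independent part yields strict inequality, producing a strictly feasible pair $(P,\lambda)$.

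I do not foresee a real obstacle here: the proof is a textbook pairing of the discrete Lyapunov theorem with the Schur complement lemma, and both steps are essentially forced by the structure of the LMI. The only bookkeeping worth flagging is verifying that the Lyapunov solution is Hermitian positive semidefinite, which is immediate from the explicit series formula together with $C^*C + I \succ 0$.
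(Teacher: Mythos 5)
Your proposal is correct and follows essentially the same route as the paper: pick $P$ via a discrete Lyapunov equation so that the $(1,1)$ block $A^*PA-P+C^*C$ is strictly negative definite, then take $\lambda$ large enough (you make the last step explicit with a Schur complement, which is a fine way to justify ``large enough''). No gaps.
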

\begin{proof}
Since $A$ is stable, there exists $P$ such that $A^*PA-P+C^*C \prec 0$.
Then by taking $\lambda$ large enough, we can find a strictly feasible point of \eqref{eq:convex-h-infinity-dual}.
\end{proof}

As a result, we have the following corollary from the Conic duality theorem \cite{ben1987lectures}.
\begin{corollary}
The duality gap between \eqref{eq:convex-h-infinity}, \eqref{eq:convex-h-infinity-dual} is zero, and the primal problem \eqref{eq:convex-h-infinity} is solvable.
\label{cor:primal-solvable}
\end{corollary}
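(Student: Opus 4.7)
The plan is to read off the corollary as a direct consequence of the Conic Duality Theorem cited as \cite{ben1987lectures}. The version of that theorem I have in mind says: if one of a primal/dual conic pair is strictly feasible and has finite optimal value, then the other is solvable and there is no duality gap. Here, the dual is the minimization \eqref{eq:convex-h-infinity-dual}, and Proposition \ref{prop:strictdual} already supplies its strict feasibility. So the only remaining task is to show that the dual's optimal value is finite, after which the theorem delivers both conclusions (zero gap and primal attainment) at once.

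To see that the dual value is finite, I would argue by weak duality once I exhibit any primal feasible point. A concrete choice is the impulse input: let $e \in \mathbb{C}^m$ be any unit vector and set $\mathbf{w}=(e,0,0,\ldots)$, so that $\mathbf{x}=\mathbf{Mw}$ has $x_0=0$, $x_1 = Be$, $x_{k+1}=Ax_k$ for $k\geq 1$, and the signal remains in $l_2$ because $A$ is Schur. Then $V = \Lambda(\mathbf{Mw},\mathbf{w}) \in \mathcal{S} \subset \mathcal{C}$ by Proposition \ref{prop:easy_direction}, and its $W$-block satisfies $\trace{W} = \|\mathbf{w}\|_2^2 = 1$, so $V$ is primal feasible. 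Weak SDP duality, which holds for any conic pair, then forces the dual infimum to be bounded below by the (finite) primal objective value at $V$, so the dual optimal value is finite.

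Having checked strict feasibility and finiteness of the dual, the Conic Duality Theorem applies and yields both assertions of the corollary: the duality gap is zero, and the primal \eqref{eq:convex-h-infinity} is attained. I do not anticipate any real obstacle; the only mild care point is making sure the impulse input used for primal feasibility lies in $l_2$, which is immediate from Schur stability of $A$. Everything else is bookkeeping against the hypotheses of the cited theorem.
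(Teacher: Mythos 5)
Your proposal is correct and follows essentially the same route as the paper, which simply invokes the Conic Duality Theorem together with the strict dual feasibility established in Proposition \ref{prop:strictdual}. The only difference is that you explicitly verify the boundedness hypothesis via an impulse-input primal feasible point and weak duality, a detail the paper leaves implicit.
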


However the dual optimum may not be attained.
%However this result does not tell us whether the dual program is solvable or not.
Let us see the following example.
\begin{example}
Let $A = \frac{1}{2}, B = 0, C = 1, D=  1$.
Then the optimal solution of \eqref{eq:convex-h-infinity} is given by $V^{\star} = \begin{bmatrix} 0 & 0\\0 & 1 \end{bmatrix}$, and the corresponding optimal value is $+1$.
The optimal value of the dual \eqref{eq:convex-h-infinity-dual} is also $+1$, by taking $\lambda^{\star} = 1$, and $P^{\star} \rightarrow \infty$.
Clearly, the dual optimum is not attained.
\label{example:weird-pair}
\end{example}

%Notice that in the above example, the pair $(A,B)$ is not controllable, which prevents us to have a solvable dual program.
The pair $(A,B)$ in the above example is not controllable, and this phenomena is closely related to the controllability assumption in the KYP lemma.
In order to ensure the existence of a multiplier $P$ (a dual optimal solution), we need the controllability assumption.
%To obtain a dual feasibility which ensures the existence of multiplier $P$ which achieves the $\mathcal{H}_{\infty}$ norm, 
\begin{proposition}
The primal program \eqref{eq:convex-h-infinity} is strictly feasible if and only if $(A,B)$ is controllable.
\end{proposition}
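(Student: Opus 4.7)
The plan is to reduce this equivalence directly to Proposition \ref{prop:nonempty}, which already characterizes when $\mathcal{C}$ meets the interior of the PSD cone. Strict feasibility of \eqref{eq:convex-h-infinity} means we can find $V = \begin{bmatrix} X & R \\ R^* & W \end{bmatrix}$ satisfying both equality constraints (the subspace constraint defining $\mathcal{C}$ and the affine constraint $\trace{W}=1$) with the conic inequality strict, i.e. $V \succ 0$. So the statement is really just saying: $\mathcal{C} \cap \mathbb{H}_{++}$ contains an element with $\trace{W}=1$ iff $(A,B)$ is controllable.

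For the ``if'' direction, I would start from Proposition \ref{prop:nonempty}, which gives some $V_0 \in \mathcal{C} \cap \mathbb{H}_{++}$ when $(A,B)$ is controllable. Writing $V_0 = \begin{bmatrix} X_0 & R_0 \\ R_0^* & W_0 \end{bmatrix}$, strict positive definiteness forces $W_0 \succ 0$ and hence $\trace{W_0} > 0$. Since $\mathcal{C}$ is a cone (both the subspace constraint and $\succeq 0$ are preserved under positive scaling), the rescaling $V = V_0 / \trace{W_0}$ lies in $\mathcal{C} \cap \mathbb{H}_{++}$ and satisfies $\trace{W} = 1$, exhibiting a strictly feasible point of \eqref{eq:convex-h-infinity}.

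For the ``only if'' direction, a strictly feasible point is by definition an element of $\mathcal{C}$ that is in $\mathbb{H}_{++}$ (the equality constraint $\trace{W}=1$ is irrelevant here), so Proposition \ref{prop:nonempty} immediately gives controllability of $(A,B)$.

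There is no real obstacle: the only thing to be careful about is the convention for strict feasibility of an SDP with mixed affine-equality and conic constraints, and the observation that $V_0 \succ 0$ automatically makes the normalization $1/\trace{W_0}$ well-defined. Both are routine, and the whole argument is essentially a one-line corollary of Proposition \ref{prop:nonempty} together with the cone property of $\mathcal{C}$.
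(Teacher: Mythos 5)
Your proof is correct and takes the same route as the paper, which likewise reduces the statement to Proposition~\ref{prop:nonempty}; you simply spell out the trace-normalization step (rescaling $V_0$ by $1/\trace{W_0}$, valid since $\mathcal{C}$ is a cone and $V_0\succ 0$ forces $\trace{W_0}>0$) that the paper leaves implicit.
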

\begin{proof} 
From Proposition \ref{prop:nonempty}, there exists $V \in \mathcal{C} \in \mathbb{H}_+$ if and only if $(A,B)$ is controllable.
\end{proof}

As a corollary, we have the following result on strong duality.

\begin{corollary}
Suppose $(A,B)$ is controllable. Then both the primal problem \eqref{eq:convex-h-infinity}, and dual \eqref{eq:convex-h-infinity-dual} are solvable.
\label{cor:solvable}
\end{corollary}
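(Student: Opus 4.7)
The plan is to combine the two strict feasibility results we now have in hand and invoke the Conic duality theorem \cite{ben1987lectures} one more time, this time ``in the other direction''. Recall that Corollary \ref{cor:primal-solvable} already gave us primal solvability and zero duality gap using only strict feasibility of the dual (Proposition \ref{prop:strictdual}); what is still missing for Corollary \ref{cor:solvable} is dual solvability, which is precisely what strict feasibility of the primal buys us.

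First I would note that by the immediately preceding proposition, controllability of $(A,B)$ implies that \eqref{eq:convex-h-infinity} is strictly feasible. Second, Proposition \ref{prop:strictdual} gives strict feasibility of \eqref{eq:convex-h-infinity-dual} unconditionally. Third, the Conic duality theorem says that if a conic program is strictly feasible and has a finite optimal value, then the dual is solvable with zero duality gap; applying this once with the roles reversed yields solvability of \eqref{eq:convex-h-infinity-dual}. Finally, primal solvability is inherited directly from Corollary \ref{cor:primal-solvable}, which required only the hypotheses already established (strict feasibility of the dual).

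There is essentially no obstacle here beyond bookkeeping: the statement is a clean packaging of Proposition \ref{prop:nonempty}, Proposition \ref{prop:strictdual}, and the Conic duality theorem. The only thing to be slightly careful about is confirming that the primal value is finite under controllability, but this is immediate because the dual is strictly feasible (hence has a feasible point with a finite objective, which weakly upper-bounds the primal optimum by weak duality). Once this finiteness is noted, both halves of the conclusion fall out in one line each.
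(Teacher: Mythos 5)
Your argument is correct and matches the paper's intended reasoning exactly: the paper leaves this corollary without an explicit proof, but the surrounding text makes clear it is meant to follow from primal strict feasibility (via controllability and the preceding proposition), dual strict feasibility (Proposition \ref{prop:strictdual}), and the Conic duality theorem, with primal solvability inherited from Corollary \ref{cor:primal-solvable}. Your added remark on why the primal optimal value is finite (weak duality against a feasible dual point) is a sensible bit of bookkeeping the paper omits.
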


\subsection{A proof of KYP lemma}
Based on the observation between the KYP lemma and our primal optimization, it is very easy to prove the KYP lemma using the theorem of alternatives for SDP \cite{balakrishnan2003semidefinite}.
%Since the theorem involves with the separating hyperplane theorem,  mathematically our proof is identical with \cite{Rantzer:2011wn}.
%we can say that the proof idea is almost identical with the proof \cite{Rantzer:2011wn}, but provides a significantly direct step.
%In fact, there are many results in robust control theory that relies on the strict separation between convex sets.
%In addition, most of them are from the image of our convex cone $\mathcal{C}$, and therefore our result provide a unified machinery to prove these classical results.
%However, when it comes to make an extension of $\mathcal{H}_{\infty}$ analysis, the theorem of alternatives automatically generates a separating hyperplane type argument, whereas the separating hyperplane argument relies on the clever construction.
%This shows a clear advantage of our new approach, and let us see how it can be done in the KYP case.
%Therefore once we obtain the main technical result, the Proposition \ref{prop:approximation}, the well-known machinery provides us another result for free.
%Let us see how it can be done in the KYP case.

%To begin with, let us present the theorem of alternative \cite{balakrishnan2003semidefinite}.
Let $\mathcal{A}$, $\mathcal{B}$ linear operators on $\mathbb{H}$.
Then,

\begin{theorem}[ALT4]
Exactly one of the following is true.
\begin{enumerate}[(i)]
\item There exists an $X \in \mathbb{H}$ with $\mathcal{A}(X) + A_0 \succ 0$, and $\mathcal{B}(X) = 0$.
\item There exists a non zero $Z \in \mathbb{H}_+$, $W \in \mathbb{H}$, $\mathcal{A}^{*}(Z) + \mathcal{B}^*(W) = 0$, and $\trace{A_0 Z} \leq 0$.
\end{enumerate}
\label{thm:alt4}
\end{theorem}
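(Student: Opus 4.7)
The plan is to obtain this theorem of alternatives as a specialization of the standard SDP separating-hyperplane argument. The easy direction---that (i) and (ii) cannot both hold---is a one-line pairing: given $X$ from (i) and $(Z,W)$ from (ii), the adjoint identity $\trace{Z \mathcal{A}(X)} = \trace{\mathcal{A}^*(Z) X}$ together with $\mathcal{A}^*(Z)+\mathcal{B}^*(W)=0$ and $\mathcal{B}(X)=0$ collapses $\trace{Z(\mathcal{A}(X)+A_0)}$ to $\trace{ZA_0} \leq 0$. But $Z \in \mathbb{H}_+ \setminus \{0\}$ paired with $\mathcal{A}(X)+A_0 \succ 0$ forces the same quantity to be strictly positive, a contradiction.

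For the hard direction I assume (i) fails and construct $(Z,W)$ via separation. Set $\mathcal{K} = A_0 + \mathcal{A}(\ker \mathcal{B})$; this is a nonempty affine subspace of $\mathbb{H}$, and failure of (i) is exactly $\mathcal{K} \cap \mathbb{H}_{++} = \emptyset$. Since $\mathbb{H}_{++}$ is nonempty, open and convex, the separating-hyperplane theorem supplies a nonzero $Z \in \mathbb{H}$ and a scalar $\alpha$ with $\trace{ZP} \leq \alpha$ on $\mathcal{K}$ and $\trace{ZY} \geq \alpha$ on $\mathbb{H}_{++}$. The multiplicative-cone structure of $\mathbb{H}_{++}$ forces $\trace{ZY} \geq 0$ there, whence $Z \succeq 0$; letting $Y \to 0$ along any positive-definite direction gives $\alpha \leq 0$. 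The affineness of $\mathcal{K}$ forces the linear functional $X \mapsto \trace{Z\mathcal{A}(X)}$ to vanish on $\ker \mathcal{B}$, that is, $\mathcal{A}^*(Z) \in (\ker \mathcal{B})^{\perp} = \Range(\mathcal{B}^*)$; choosing any $W$ with $\mathcal{B}^*(W) = -\mathcal{A}^*(Z)$ completes the certificate, and $\trace{ZA_0}$, being the sole value attained by $\trace{Z\cdot}$ on $\mathcal{K}$, is $\leq \alpha \leq 0$.

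The main obstacle is not technical depth but cleanly extracting both halves of the separation: (a) the linear functional $\trace{Z\cdot}$ must be \emph{constant}, not merely bounded above, on the affine set $\mathcal{K}$---otherwise scaling any nonzero direction of $\mathcal{A}(\ker \mathcal{B})$ violates the upper bound---and (b) positive semidefiniteness of $Z$ must be read off from boundedness-below of $\trace{Z\cdot}$ over the open cone $\mathbb{H}_{++}$, via a scaling/recession argument in a negative-eigenvalue direction. Both are standard but are the only real work in the proof; once in place, the certificate of (ii) assembles without further effort.
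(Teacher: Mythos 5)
Your proof is correct, but there is nothing in the paper to compare it against: Theorem ALT4 is stated without proof and simply imported from the cited reference on theorems of alternatives for SDP (\cite{balakrishnan2003semidefinite}). Your argument is the standard one used there. Both halves check out: the pairing $\trace{Z(\mathcal{A}(X)+A_0)}=\trace{\mathcal{A}^*(Z)X}+\trace{ZA_0}=-\trace{W\mathcal{B}(X)}+\trace{ZA_0}=\trace{ZA_0}\leq 0$ against strict positivity rules out coexistence, and in the converse direction the separation of the affine set $A_0+\mathcal{A}(\ker\mathcal{B})$ from the open cone $\mathbb{H}_{++}$, together with the two scaling arguments you flag (constancy of $\trace{Z\,\cdot}$ on the affine set because $\ker\mathcal{B}$ is a subspace, and $Z\succeq 0$ with $\alpha\leq 0$ from recession in the cone), yields the certificate via $(\ker\mathcal{B})^{\perp}=\Range(\mathcal{B}^*)$ in the finite-dimensional space $\mathbb{H}$. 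No gaps.
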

%\begin{theorem}[ALT5]
%Exactly one of the following is true.
%\begin{enumerate}
%\item There exists an $X \in \mathbb{H}$ with $\mathcal{A}(X) + A_0 \succeq 0$, and $\mathcal{B}(X) = 0$.
%\item There exists a $Z \in \mathbb{H}_{++}$, $W \in \mathbb{H}$, $\mathcal{A}^{*}(Z) + \mathcal{B}^*(W) = 0$, and $\trace{A_0 Z} \leq 0$.
%\end{enumerate}
%\label{thm:alt4}
%\end{theorem}
%If the range of $\mathcal{B}$ is $\mathbb{H}$ then we have the following theorem.
%\begin{theorem}[ALT4]
%Exactly one of the following is true.
%\begin{enumerate}
%\item There exists an $X \in \mathbb{H}$ with $\mathcal{A}(X) \succeq 0$, but $\mathcal{A}(X) \neq 0$, and $\mathcal{B}(X) = 0$.
%\item There exists a $Z \in \mathbb{H}_+$, and $P \in \mathcal{H}$, and $\mathcal{A}^{\text{adj}}(Z) + \mathcal{B}^{\text{adj}}(P) = 0$, and $\trace{A_0 Z} \leq 0$.
%\end{enumerate}
%\label{thm:alt4}
%\end{theorem}
%If the range of $\mathcal{B}$ is $\mathbb{M}$, then we need to modify $P \in \mathbb{H}$ by $P \in \mathbb{M}$ in the above theorem.
%
Now we are ready to prove the following theorem.
\begin{theorem}[KYP lemma, strict inequality]
The optimal value of \eqref{eq:original-h-infinity}, $\mu_{\infty} < 1$ if and only if there exists $P \in \mathbb{H}^n$ such that 
\begin{equation}
\begin{bmatrix} A^*PA-P & A^*PB\\B^*PA & B^*PB \end{bmatrix}+ \begin{bmatrix} C^*C & C^*D\\ D^*C & D^*D-I_m \end{bmatrix} \prec 0 \label{eq:kypp}.
\end{equation}
\label{thm:strict-kyp}
\end{theorem}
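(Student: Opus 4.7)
The plan is to apply the SDP theorem of alternatives, Theorem \ref{thm:alt4}, to the LMI \eqref{eq:kypp}, using the primal formulation \eqref{eq:convex-h-infinity} of $\mu_\infty$ to translate the dual alternative into a concrete witness. I would set $\mathcal{A}(P) = -\begin{bmatrix} A^*PA-P & A^*PB \\ B^*PA & B^*PB \end{bmatrix}$, $A_0 = -\begin{bmatrix} C^*C & C^*D \\ D^*C & D^*D - I_m \end{bmatrix}$, and take $\mathcal{B} = 0$; then \eqref{eq:kypp} is exactly $\mathcal{A}(P) + A_0 \succ 0$, i.e.\ alternative (i) of ALT4.

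For the direction $\Leftarrow$, assume $P$ satisfies \eqref{eq:kypp} and call its left-hand side $M$. For any $V = \begin{bmatrix} X & R \\ R^* & W \end{bmatrix} \in \mathcal{C}$, the quadratic cross term $\trace{\begin{bmatrix} A^*PA-P & A^*PB \\ B^*PA & B^*PB \end{bmatrix} V}$ telescopes to $\trace{P\bigl(\begin{bmatrix} A & B \end{bmatrix} V \begin{bmatrix} A^* \\ B^* \end{bmatrix} - X\bigr)}$, which vanishes by the defining equation of $\mathcal{C}$. Hence for any nonzero $V \in \mathcal{C}$, $0 > \trace{MV} = \trace{\begin{bmatrix} C^*C & C^*D \\ D^*C & D^*D \end{bmatrix} V} - \trace{W}$. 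Restricting to $\trace{W} = 1$, every feasible point of \eqref{eq:convex-h-infinity} gives objective strictly less than $1$; since the primal is solvable (Corollary \ref{cor:primal-solvable}) and its optimal value equals $\mu_\infty$ (Proposition \ref{prop:h-infinity}), $\mu_\infty < 1$.

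For $\Rightarrow$, I would prove the contrapositive. If no $P$ satisfies \eqref{eq:kypp}, alternative (ii) of ALT4 produces a nonzero $Z \in \mathbb{H}_+$ with $\mathcal{A}^*(Z) = 0$ and $\trace{A_0 Z} \leq 0$. Using cyclicity of the trace, one computes $\mathcal{A}^*(Z) = X - \begin{bmatrix} A & B \end{bmatrix} Z \begin{bmatrix} A^* \\ B^* \end{bmatrix}$, so the first condition is exactly the defining equation of $\mathcal{C}$, i.e.\ $Z \in \mathcal{C}$; the second rearranges to $\trace{\begin{bmatrix} C^*C & C^*D \\ D^*C & D^*D \end{bmatrix} Z} \geq \trace{W}$. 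If $\trace{W} > 0$, then $V := Z/\trace{W}$ is feasible for \eqref{eq:convex-h-infinity} with objective $\geq 1$, so $\mu_\infty \geq 1$.

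The only step needing care is ruling out $\trace{W} = 0$ so that $Z$ can actually be normalized. If $\trace{W} = 0$ then $W = 0$; positive semidefiniteness of $Z$ forces $R = 0$, and the cone condition collapses to $A X A^* = X$. Schur stability of $A$ then forces $X = 0$, contradicting $Z \neq 0$. This edge-case argument, relying on $\rho(A) < 1$, is the only non-mechanical ingredient; the adjoint computation and the cone-equation telescoping are both routine, and the rest is simply bookkeeping between ALT4 and the primal SDP.
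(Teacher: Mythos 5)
Your proof is correct and takes essentially the same route as the paper: apply Theorem \ref{thm:alt4} to the KYP LMI and identify alternative (ii) with a feasible point of the primal SDP \eqref{eq:convex-h-infinity} achieving objective at least $1$, using solvability of the primal for the converse. You are in fact more careful than the paper's own proof, which silently normalizes the certificate to $\trace{W}=1$; your Schur-stability argument ruling out $\trace{W}=0$ (forcing $W=0$, then $R=0$, then $X=0$) closes that small gap explicitly.
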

\begin{proof}
From the Theorem \ref{thm:alt4}, there exists $P$ with \eqref{eq:kypp} holds if and only if there is no non-zero $V \succeq 0$ such that
$\begin{bmatrix} A & B \end{bmatrix}V\begin{bmatrix} A & B \end{bmatrix}^*  = \begin{bmatrix} I & 0 \end{bmatrix}V\begin{bmatrix} I & 0 \end{bmatrix}^*$, and $
\trace{ \begin{bmatrix} C^*C & C^*D\\ D^*C & D^*D - I \end{bmatrix} V} \geq 0$.
Notice that  this is equivalent to the optimal value of \eqref{eq:convex-h-infinity} is greater than equal to $1$, since the optimum is always attained. 
Since \eqref{eq:convex-h-infinity} is equivalent to  \eqref{eq:original-h-infinity}, we can conclude the proof.
\end{proof}
%The corresponding non-strict inequality version is easier to prove.
\begin{theorem}[KYP lemma, non-strict inequality]
Suppose $(A,B)$ is controllable.
Then $\mu_{\infty} \leq 1$ if and only if there exists $P \in \mathbb{H}^n$ such that 
\begin{equation}
\begin{bmatrix} A^*PA-P & A^*PB\\B^*PA & B^*PB \end{bmatrix}+ \begin{bmatrix} C^*C & C^*D\\ D^*C & D^*D-I_m \end{bmatrix} \preceq 0 \label{eq:kypnp}.
\end{equation}
\label{thm:strict-kyp}
\end{theorem}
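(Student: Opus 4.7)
The plan is to derive \eqref{eq:kypnp} from strong SDP duality between the primal \eqref{eq:convex-h-infinity} and its dual \eqref{eq:convex-h-infinity-dual}, with the controllability hypothesis entering through Corollary~\ref{cor:solvable} to guarantee that the dual optimum is actually \emph{attained} (rather than merely approached, as in Example~\ref{example:weird-pair}). The observation that ties everything together is that the dual LMI in \eqref{eq:convex-h-infinity-dual} is monotone in $\lambda$: increasing $\lambda$ only subtracts a positive semidefinite term from the LHS, so feasibility of $(\lambda, P)$ automatically implies feasibility of $(\lambda', P)$ for every $\lambda' \geq \lambda$. In particular, the dual LMI evaluated at $\lambda = 1$ is precisely the matrix inequality \eqref{eq:kypnp}.

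For the sufficiency direction ($\Leftarrow$), which needs no controllability assumption, existence of $P$ satisfying \eqref{eq:kypnp} makes $(\lambda, P) = (1, P)$ a dual feasible point of objective value $1$, so weak duality, combined with the equivalence of \eqref{eq:original-h-infinity} and \eqref{eq:convex-h-infinity}, yields $\mu_{\infty} \leq 1$. For the necessity direction ($\Rightarrow$), Proposition~\ref{prop:nonempty} supplies some $V \in \mathcal{C} \cap \mathbb{H}_{++}$; normalizing so that $\trace{W} = 1$ exhibits strict primal feasibility of \eqref{eq:convex-h-infinity}, which together with the strict dual feasibility from Proposition~\ref{prop:strictdual} triggers Corollary~\ref{cor:solvable} and produces an attained dual optimum $(\lambda^*, P^*)$ with $\lambda^* = \mu_{\infty} \leq 1$. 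Lifting $\lambda^*$ up to $1$ via the monotonicity observation shows that $(1, P^*)$ is still dual feasible, which is exactly \eqref{eq:kypnp} with $P = P^*$.

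The only delicate point, and the reason the controllability hypothesis is imposed in the statement, is the attainment of the dual optimum. Without it the dual value still equals $\mu_{\infty}$ by Corollary~\ref{cor:primal-solvable}, but no finite multiplier $P^*$ need realize that value, as Example~\ref{example:weird-pair} makes explicit. Consequently, unlike the strict KYP lemma (Theorem~\ref{thm:strict-kyp}), we cannot route the argument through the alternative theorem of Theorem~\ref{thm:alt4}, whose non-strict version would need a boundary dichotomy that SDPs do not in general satisfy; the controllability assumption, funneled through Corollary~\ref{cor:solvable}, is precisely the price one pays for replacing the strict inequality by its non-strict counterpart.
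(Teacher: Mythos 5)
Your proof is correct and follows the same essential route as the paper: both hinge on Corollary~\ref{cor:solvable} (controllability $\Rightarrow$ the dual \eqref{eq:convex-h-infinity-dual} attains its optimum) to produce the multiplier $P$, and on weak duality for the converse. The one genuine difference is how you handle the case $\mu_{\infty} < 1$: the paper splits into cases and dispatches $\mu_{\infty}<1$ by invoking the strict-inequality KYP theorem, reserving the duality-attainment argument for the boundary case $\mu_{\infty}=1$, whereas you treat both cases uniformly by noting that the dual LMI is monotone in $\lambda$ and lifting the attained $\lambda^{\star}=\mu_{\infty}$ up to $1$. Your version is arguably cleaner (it avoids a dependence on the strict theorem and its theorem-of-alternatives machinery), and you also make explicit the $\Leftarrow$ direction via weak duality, which the paper leaves implicit in the phrase ``this is equivalent to.'' One cosmetic caveat: the dual as printed in \eqref{eq:convex-h-infinity-dual} appears to be missing the $-P$ term in the upper-left block, so your claim that it coincides with \eqref{eq:kypnp} at $\lambda=1$ holds for the correctly derived dual rather than for the displayed formula; this is a typo in the paper, not a gap in your argument.
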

\begin{proof}
If $\mu_{\infty} < 1$, then from the Theorem \ref{thm:strict-kyp} the result is obvious.
Suppose the optimal value of \eqref{eq:original-h-infinity} is $1$. Then the optimal value of \eqref{eq:convex-h-infinity} is also $1$.
From the Corollary \ref{cor:solvable}, this is equivalent to the existence of the dual optimal solution $(\lambda, P) = (1, P^{\star})$, and this concludes the proof.
\end{proof}

\section{Extended $\mathcal{H}_{\infty}$ analysis}
In $\mathcal{H}_{\infty}$ analysis, a disturbance $\mathbf{w}$ is assumed to have an unit energy, $\|\mathbf{w}\|_2 = 1$.
Suppose more information about a disturbance is known beforehand.
Then $\mathcal{H}_{\infty}$ norm becomes conservative since the analysis does not exploit this additional information.
Therefore, it is natural to ask a question whether we can capture more general disturbance sets beyond $\|\mathbf{w}\|_2 = 1$, and formulate appropriate $\mathcal{H}_{\infty}$ optimization.
%This rises to a question whether we can capture more general disturbance set beyond the finite energy.
In this section, we propose extended $\mathcal{H}_{\infty}$ analysis with different constraints on the disturbance.
Some of these results are known from \cite{d1995h} in the form of the scaled small gain test, 
but we explicitly propose a well-defined optimization problem which contains a disturbance information which allows us to extract the worst-case disturbance,
and provide its exactness without additional effort.
% are known from \cite{d1995h} in the 
%Our problem formulation can be understood as an extension of \cite{d1995h}, but we explicitly propose a well-defined optimization problem.

Now consider the following disturbance set.
\BEAS
\mathcal{W} = \left\{\mathbf{w} \in l_2: f_i\left(\Lambda(\mathbf{w})\right) \preceq 0, i=1,\cdots,n_c\right\},
\EEAS
where each $f_i$ is a matrix valued linear function maps $\mathbb{H}$ to $\mathbb{H}$.
This set can be used to capture some interesting prior knowledge on the disturbance.
\begin{example}[$\mathcal{H}_{\infty}$ analysis]
In the $\mathcal{H}_{\infty}$ analysis, we require $\|\mathbf{w}\|_2 = 1$.
Using $f_1(W) = \trace{W}-1$, $f_2(W) = 1-\trace{W}$, then, 
$$\mathcal{W} = \left\{\mathbf{w} \in l_2: \trace{\Lambda(\mathbf{w})} = \|\mathbf{w}\|_2^2 = 1\right\},$$
which is desired.
\end{example}
\begin{example}[Square $\mathcal{H}_{\infty}$ analysis]
In \cite{d1995h}, the disturbance $\mathbf{w}$ satisfies $\|\mathbf{w}_i\|_2 \leq 1$, for $i = 1,\cdots,m$, where $\mathbf{w}_i$ is the $i$th component of $\mathbf{w}$.
Using $f_i(W) = W_{ii}-1$, for $i = 1,\cdots,m$, then,
$$\mathcal{W} = \left\{\mathbf{w} \in l_2: f_i\left(\Lambda(\mathbf{w})\right) = \|\mathbf{w}_i\|_2^2 \leq 1, i=1,\cdots,m\right\},$$
which is desired.
\label{ex:square}
\end{example}
\begin{example}[Grouped square $\mathcal{H}_{\infty}$ analysis]
Suppose the disturbance is in $l_2^4$, and  $\|\mathbf{w}_1\|_2^2 + \|\mathbf{w}_2\|_2^2 \leq 1$, $\|\mathbf{w}_3\|_2^2 +\|\mathbf{w}_4\|_2^2 \leq 1$.
Then using $f_1(W) = W_{11} + W_{22} - 1$, $f_2(W) = W_{33} + W_{44} - 1$, we can capture this disturbance.
\end{example}
\begin{example}[Principal component bound]
Suppose we know that the maximum eigenvalue of the autocovariance of the signal $\mathbf{w}$ is bounded by one.
This can be easily captured by $f_1(W) = W - I_m$. and
$\mathcal{W} = \left\{\mathbf{w} \in l_2: \Lambda(\mathbf{w}) \preceq I_m\right\}$.
See \cite{paganini1999convex} and references therein for the application of this disturbance modeling.
\end{example}

These examples show that our modeling framework can capture various information on the disturbance.
Now the next step is to find a computationally tractable method to analyze the worst-case performance as in the $\mathcal{H}_{\infty}$ case.
In other words, we would like to find a way to solve the following infinite dimensional optimization.
\begin{equation}
\begin{aligned}
& \underset{\mathbf{w},\mathbf{x}} {\text{maximize}}
& & \|C\mathbf{x} + D\mathbf{w}\|_2^2\\
&\text{subject to}
&& x_{k+1} = Ax_k + Bw_k, x_0 = 0\\
&&& \mathbf{w} \in \mathcal{W}.
\end{aligned}
\label{eq:original-general-h-infinity}
\end{equation}
Using the Gramian $V = \begin{bmatrix} X & R\\R^* & W\end{bmatrix} = \Lambda(\mathbf{Mw}, \mathbf{w})$ as in the $\mathcal{H}_{\infty}$ analysis, the above optimization is equivalent to

\begin{equation}
\begin{aligned}
& \underset{X,R,W} {\text{maximize}}
& & \trace{\begin{bmatrix}C^*C & C^*D\\D^*C&D^*D\end{bmatrix} \begin{bmatrix} X & R\\R^* & W\end{bmatrix}}\\
&\text{subject to}
&& V=\begin{bmatrix} X & R\\R^* & W\end{bmatrix} \in \mathcal{S}\\
&&& f_i(W)\preceq 0, \qquad i = 1,\cdots,n_c.
\end{aligned}
\label{eq:nonconvex-general-h-infinity}
\end{equation}

In $\mathcal{H}_{\infty}$ analysis, we replace $\mathcal{S}$ by $\mathcal{C}$ which is an SDP representable set.
For \eqref{eq:nonconvex-general-h-infinity}, we can also apply the same procedure.
The following proposition can be seen as a generalization of the Proposition \ref{prop:h-infinity}.
\begin{proposition}
Let $f_i:\mathbb{H} \rightarrow \mathbb{H}$, $i = 1\cdots, n_c$, be the linear function.
Define the set $\mathcal{F} = \{V=\begin{bmatrix} X & R\\R^* & W\end{bmatrix} \in \mathbb{H}:f_i(W) \preceq 0, i = 1,\cdots,n_c\}$.
Then $\cl({\mathcal{S} \cap \mathcal{F}}) = \mathcal{C} \cap \mathcal{F}$.
\label{prop:general-approx}
\end{proposition}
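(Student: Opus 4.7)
The plan is to mimic exactly the proof of Proposition \ref{prop:h-infinity}, relying on the fact that Lemma \ref{prop:approximation} guarantees the approximating signal satisfies $\Lambda(\mathbf{w}) = W$ \emph{exactly}, not merely approximately. Because every constraint defining $\mathcal{F}$ depends only on the $W$-block, this exactness lets the approximating signal inherit membership in $\mathcal{W}$ for free.

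First I would dispatch the inclusion $\cl(\mathcal{S} \cap \mathcal{F}) \subset \mathcal{C} \cap \mathcal{F}$. Since each $f_i$ is linear on the finite-dimensional space $\mathbb{H}$ it is continuous, and since $-\mathbb{H}_+$ is closed, each set $\{V : f_i(W) \preceq 0\}$ is closed, so $\mathcal{F}$ is closed as a finite intersection of closed sets. Combined with Lemma \ref{prop:closure}, this yields
\[
\cl(\mathcal{S} \cap \mathcal{F}) \;\subset\; \cl(\mathcal{S}) \cap \cl(\mathcal{F}) \;=\; \mathcal{C} \cap \mathcal{F}.
\]

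For the reverse inclusion, I would take an arbitrary $V = \begin{bmatrix} X & R \\ R^* & W \end{bmatrix} \in \mathcal{C} \cap \mathcal{F}$ and fix $\varepsilon > 0$. Applying Lemma \ref{prop:approximation} produces a finitely supported signal $\mathbf{w} \in l_2$ with $\|\Lambda(\mathbf{Mw}, \mathbf{w}) - V\|_F < \varepsilon$ and, crucially, $\Lambda(\mathbf{w}) = W$ on the nose. Because $V \in \mathcal{F}$, we have $f_i(W) \preceq 0$ for each $i$, and by exactness of the $W$-block this transfers verbatim: $f_i(\Lambda(\mathbf{w})) = f_i(W) \preceq 0$. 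Hence $\Lambda(\mathbf{Mw}, \mathbf{w}) \in \mathcal{S} \cap \mathcal{F}$, and letting $\varepsilon \downarrow 0$ shows $V \in \cl(\mathcal{S} \cap \mathcal{F})$.

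There is essentially no obstacle, since all the real work was done in Lemma \ref{prop:approximation}. The only subtlety worth flagging is the insistence on the exact identity $\Lambda(\mathbf{w}) = W$ in that lemma; had the lemma only guaranteed closeness in Frobenius norm, the constraints $f_i(\Lambda(\mathbf{w})) \preceq 0$ might fail even when $f_i(W) \preceq 0$ holds (for instance on the boundary where some $f_i(W) = 0$), and an extra perturbation argument would be needed. As stated, the exactness sidesteps this entirely and the generalization is immediate.
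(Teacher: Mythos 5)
Your proof is correct and follows essentially the same route as the paper's: both directions use $\cl(\mathcal{S})=\mathcal{C}$ together with the closedness of $\mathcal{F}$ for one inclusion, and the exactness of $\Lambda(\mathbf{w})=W$ from Lemma \ref{prop:approximation} to transfer the constraints $f_i(W)\preceq 0$ to the approximating gramian for the other. Your added remarks on why $\mathcal{F}$ is closed and why exactness of the $W$-block is essential are correct elaborations of points the paper leaves implicit.
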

\begin{proof}
Since $\cl{\mathcal{S}} = \mathcal{C}$, and $\cl{\mathcal{F}} = \mathcal{F}$, we have $\cl{(\mathcal{S} \cap \mathcal{F})} \subset \mathcal{C} \cap \mathcal{F}$.
Now suppose $\begin{bmatrix} X & R\\R^* & W\end{bmatrix} \in  \mathcal{C} \cap \mathcal{F}$. 
From Lemma \ref{prop:approximation}, for any $\epsilon > 0$, there exists $\mathbf{w} \in l_2$ such that
$\Lambda(\mathbf{w}) = W$, and $\|V-\Lambda(\mathbf{Mw}, \mathbf{w})\|_F < \epsilon$.
Since $\Lambda(\mathbf{Mw}, \mathbf{w}) \in \mathcal{S} \cap \mathcal{F}$, we can conclude that $V \in \cl({\mathcal{S} \cap \mathcal{F}})$.
\end{proof}

Therefore, the optimization \eqref{eq:nonconvex-general-h-infinity} is equivalent to the following SDP.
\begin{equation}
\begin{aligned}
& \underset{X,R,W} {\text{maximize}}
& & \trace{\begin{bmatrix}C^*C & C^*D\\D^*C&D^*D\end{bmatrix} \begin{bmatrix} X & R\\R^* & W\end{bmatrix}}\\
&\text{subject to}
&&  X =  \begin{bmatrix} A & B \end{bmatrix}\begin{bmatrix} X & R\\R^* & W\end{bmatrix}\begin{bmatrix} A^*\\ B^* \end{bmatrix}\\
&&& \begin{bmatrix} X & R\\R^* & W\end{bmatrix} \succeq 0,\\
&&& f_i(W)\preceq 0, \qquad i = 1,\cdots,n_c.
\end{aligned}
\label{eq:convex-general-h-infinity}
\end{equation}

Notice that as in the $\mathcal{H}_{\infty}$ analysis case, once the optimal solution of \eqref{eq:convex-general-h-infinity} is obtained, we can explicitly construct the worst case disturbance $\mathbf{w}$ (approximately) achieves the maximum value, and this is not available in \cite{d1995h}.
 
Moreover, it is easy to derive the dual program using SDP duality and the KYP lemma like result can be obtained for a given disturbance set.
For example, applying the Theorem \ref{thm:alt4} to the Example \ref{ex:square}, the square $\mathcal{H}_{\infty}$ analysis, 
the optimal value of \eqref{eq:convex-general-h-infinity} is less than $1$ if and only if
there exists $P, Y$ such that
\BEAS
Y &\succeq& 0\\
\trace{Y} &<& 1\\
 \begin{bmatrix} A^*PA-P & A^*PB\\B^*PA & B^*PB \end{bmatrix}+ \begin{bmatrix} C^*C & C^*D\\ D^*C & D^*D-Y \end{bmatrix} &\prec& 0,
\EEAS
where $Y$ is a diagonal matrix. Notice that this result can also be found in \cite{d1995h}, but the proof in here is significantly simplified.
%We will show how this procedure can be done for square $\mathcal{H}_{\infty}$ analysis in the next section as an example.

Some of these results may be obtained obtained through the $\mathcal{S}$-procedure, but proving the $\mathcal{S}$-procedure is lossless is not a trivial task \cite{polik2007survey}.
However, our approach shows that the $\mathcal{S}$-procedure can be viewed as a SDP relaxation of \eqref{eq:nonconvex-general-h-infinity}.
This idea may trace back to \cite{scherer2006lmi}, but the direct, primal formulation of the problem firstly appears here to the best of our knowledge.

%\section{Generalized $\mathcal{H}_{\infty}$ analysis}
%\section{Robust stability}
%\subfile{../chapters/robust_stability}
\section{Robust stability analysis}
The robust stability analysis investigates the stability of the feedback interconnection between the nominal plant $\mathbf{G}$, a bounded operator from $l_2$ to  $l_2$, and the uncertain operator ${\Delta}$, a bounded operator from $l_2$ to $l_2$, which belongs to a set $\mathbf{\Delta}$.
The plant $\mathbf{G}$ is said to be robustly stable with respect to $\mathbf{\Delta}$ if 
\BEAS
\mathbf{I}-\Delta\mathbf{G},
\EEAS
is non-singular for all $\Delta \in \mathbf{\Delta}$.
See \cite{Dullerud:2010tc} and references therein.

In this section, we investigate the possible connection between the robust stability analysis and our key Lemmas, Lemma \ref{prop:approximation} and \ref{prop:closure}. 

To begin with, we assume that the uncertainty set $\mathbf{\Delta}$ admits the equivalent input-output characterization \cite{paganini1994behavioral, d1993uncertain}, that is, there exists a set $\mathcal{R}_a$ such that
\BEAS
\mathcal{R}_a := \{(\mathbf{z}, \mathbf{w}): \text{There exists $\Delta \in \mathbf{\Delta}$ such that $\mathbf{w} = \Delta \mathbf{z}$}\}.
\EEAS

Moreover, we assume that $\mathcal{R}_a$ can be completely characterized by a linear map, \ie, there exists  $f: \mathbb{H} \rightarrow \mathbb{H}$ such that $(\mathbf{z},\mathbf{w}) \in \mathcal{R}_a$ if and only if
\BEAS
f(\Lambda(\mathbf{z},\mathbf{w})) \succeq 0.
\EEAS
Let $\Lambda(\mathbf{z},\mathbf{w}) = \begin{bmatrix} Z & R\\R^* & W \end{bmatrix}$, then the following examples show that how $f$ can be used to describe $\mathcal{R}_a$.

\begin{example}[Full block complex LTV]
Suppose $\mathbf{\Delta} = \{\Delta : \|\Delta\| \leq 1\}$. 
Then $(\mathbf{z},\mathbf{w}) \in \mathcal{R}_a$ if and only if $\|\mathbf{w}\|_2 \leq \|\mathbf{z}\|_2$.
Therefore,
$f(\Lambda(\mathbf{z},\mathbf{w})) = \trace{Z} - \trace{W}$.
\label{ex:fullLTV}
\end{example}
\begin{example}[Two block complex LTV, \cite{shamma1994robust}]
Suppose $\mathbf{\Delta} = \{\Delta : \Delta = \begin{bmatrix} \Delta_1 & 0\\ 0& \Delta_2 \end{bmatrix}, \|\Delta_i\| \leq 1\}$. 
Then $(\mathbf{z},\mathbf{w}) \in \mathcal{R}_a$ if and only if $\|\mathbf{w}_1\|_2 \leq \|\mathbf{z}_1\|_2$, and $\|\mathbf{w}_2\|_2 \leq \|\mathbf{z}_2\|_2$, where $\mathbf{z}_i, \mathbf{w}_i$ are appropriately partitioned according to the size of $\Delta_i$.
In this case,
$f(\Lambda(\mathbf{z},\mathbf{w})) = \begin{bmatrix} \trace{Z_1} - \trace{W_1} &0\\ 0& \trace{Z_2} - \trace{W_2} \end{bmatrix}$.
\label{example:two-block}
\end{example}
\begin{example}[Scalar block complex LTV, \cite{paganini1996sets}]
Suppose $\mathbf{\Delta} = \{\Delta : \Delta = \delta \mathbf{I}, \|\delta\| \leq 1\}$.
Then $(\mathbf{z},\mathbf{w}) \in \mathcal{R}_a$ if and only if $\Lambda(\mathbf{w}) \preceq \Lambda(\mathbf{z})$.
Therefore,
$f(\Lambda(\mathbf{z},\mathbf{w})) = Z-W$.
\label{ex:scalarLTV}
\end{example}
\begin{example}[Integral quadratic constraints, \cite{megretski1997system}]
Suppose $(\mathbf{z},\mathbf{w}) \in \mathcal{R}_a$ if and only if
\BEAS
\sum_{k=0}^{\infty} \begin{bmatrix} z_k\\w_k\end{bmatrix}^*\Pi\begin{bmatrix} z_k\\w_k\end{bmatrix} \geq 0.
\EEAS
This is equivalent to
\BEAS
\trace{\Pi \begin{bmatrix} Z & R\\R^* & W \end{bmatrix}} \geq 0.
\EEAS
\label{ex:IQC}
\end{example}

Notice that using this set description of $\mathcal{R}_a$, we have
\BEAS
&&(\mathbf{I}-\Delta\mathbf{G})\mathbf{w} = 0\\
&{\Leftrightarrow}& \mathbf{w} = \Delta \mathbf{Gw}\\
&{\Leftrightarrow}& (\mathbf{Gw},\mathbf{w}) \in \mathcal{R}_a,
\EEAS
This means if there exists $\|\mathbf{w}\| = 1$ such that $(\mathbf{Gw},\mathbf{w}) \in \mathcal{R}_a$, then $\mathbf{G}$ is not robustly stable.
Therefore it is natural to consider the following set of values generated by $\mathcal{G}$:
\BEAS
\mathcal{G} := \{f(\Lambda(\mathbf{Gw}, \mathbf{w})): \|\mathbf{w}\| = 1\},
\EEAS
then we can easily see that $\mathcal{G} \cap \mathbb{H}_+ \neq \varnothing$ is the equivalent condition for the existence of $\|\mathbf{w}\| = 1$, such that$(\mathbf{Gw},\mathbf{w}) \in \mathcal{R}_a$.
In other words, $\mathcal{G} \cap \mathbb{H}_+ \neq \varnothing$ then the system $\mathbf{G}$ cannot be robustly stable.

Therefore we can change the robust stability question to a set relationship question, but characterizing $\mathcal{G}$ may not be trivial.
However, if $\mathbf{G}$ has a state-space form such that $\mathbf{x} = \mathbf{Mw}$, and $\mathbf{z} = C\mathbf{x} + D\mathbf{w}$,
then we have the following proposition which is a direct consequence of our main result.

\begin{proposition}
Suppose $\mathbf{G}(e^{j\theta}) = C(e^{j\theta}I -A)^{-1}B + D$. Then the closure of $\mathcal{G}$ is given by
\BEAS
\cl\mathcal{G} = \Big\{f\left(\begin{bmatrix} C & D\\ 0 & I\end{bmatrix} \begin{bmatrix} X & R\\R^* & W\end{bmatrix}\begin{bmatrix} C & D\\ 0 & I\end{bmatrix}^*\right) : \\
\trace{W} = 1, \begin{bmatrix} X & R\\R^* & W\end{bmatrix} \in \mathcal{C}\Big\}
\EEAS
\end{proposition}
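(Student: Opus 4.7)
The plan is to identify $\mathcal{G}$ as the image of $\mathcal{S}\cap\mathcal{F}$ under a continuous linear map and then invoke Proposition \ref{prop:h-infinity}. First, observe that if $\mathbf{x}=\mathbf{Mw}$ and $\mathbf{z}=C\mathbf{x}+D\mathbf{w}$, then pointwise
$$
\begin{bmatrix} z_k\\w_k\end{bmatrix}\begin{bmatrix} z_k\\w_k\end{bmatrix}^*
= T\begin{bmatrix} x_k\\w_k\end{bmatrix}\begin{bmatrix} x_k\\w_k\end{bmatrix}^*T^*,\quad T:=\begin{bmatrix} C & D\\ 0 & I\end{bmatrix},
$$
so summing gives $\Lambda(\mathbf{Gw},\mathbf{w}) = T\,\Lambda(\mathbf{Mw},\mathbf{w})\,T^*$. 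Define the continuous linear map $g(V):=f(TVT^*)$. With $\mathcal{F}:=\{V:\trace{W}=1\}$ (the same $\mathcal{F}$ as in Proposition \ref{prop:h-infinity}), we have $\mathcal{G}=g(\mathcal{S}\cap\mathcal{F})$, while the right-hand set in the statement is precisely $g(\mathcal{C}\cap\mathcal{F})$. The claim therefore reduces to showing $\cl(g(\mathcal{S}\cap\mathcal{F}))=g(\mathcal{C}\cap\mathcal{F})$.

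By Proposition \ref{prop:h-infinity}, $\cl(\mathcal{S}\cap\mathcal{F})=\mathcal{C}\cap\mathcal{F}$. The forward inclusion $g(\mathcal{C}\cap\mathcal{F})\subseteq\cl\mathcal{G}$ is then immediate from continuity: any $V\in\mathcal{C}\cap\mathcal{F}$ is the limit of some $V_n\in\mathcal{S}\cap\mathcal{F}$ (the construction in Lemma \ref{prop:approximation} preserves $\Lambda(\mathbf{w}_n)=W$, hence $\trace{W_n}=1$), and $g(V_n)\to g(V)\in\cl\mathcal{G}$.

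The reverse inclusion $\cl\mathcal{G}\subseteq g(\mathcal{C}\cap\mathcal{F})$ is the main obstacle, since the image of a closed set under a linear map need not be closed in general. I would argue by compactness: a convergent sequence $y_n=g(V_n)\to y$ with $V_n\in\mathcal{S}\cap\mathcal{F}$ can be represented as $V_n=\Lambda(\mathbf{Mw}_n,\mathbf{w}_n)$ with $\|\mathbf{w}_n\|_2=1$. Since $\mathbf{M}$ is bounded on $l_2$ (because $A$ is Schur stable), $\trace{X_n}=\|\mathbf{Mw}_n\|_2^2\le\|\mathbf{M}\|^2$ and $\trace{W_n}=1$, while positive semidefiniteness of $V_n$ forces $\|R_n\|_F^2\le\trace{X_n}\,\trace{W_n}$. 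Hence $\{V_n\}$ lies in a bounded, and therefore relatively compact, subset of the finite-dimensional space $\mathbb{H}$. Extracting a convergent subsequence $V_{n_j}\to V^\star$, using closedness of $\mathcal{C}\cap\mathcal{F}$ to conclude $V^\star\in\mathcal{C}\cap\mathcal{F}$, and applying continuity of $g$ gives $y=g(V^\star)\in g(\mathcal{C}\cap\mathcal{F})$.

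The essential difficulty is the boundedness step in the reverse inclusion, but the $l_2$ representation afforded by $V_n\in\mathcal{S}$ disposes of it almost for free; no properties of $f$ beyond linearity (hence continuity) are used, so the argument applies uniformly to all of the examples (full block, two block, scalar block LTV, and IQCs) listed earlier.
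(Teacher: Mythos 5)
Your reduction is the same one the paper uses: write $\Lambda(\mathbf{Gw},\mathbf{w}) = T\,\Lambda(\mathbf{Mw},\mathbf{w})\,T^*$ with $T = \begin{bmatrix} C & D\\ 0 & I\end{bmatrix}$, so that $\mathcal{G} = g(\mathcal{S}\cap\mathcal{F})$ with $g(V) = f(TVT^*)$, and then invoke $\cl(\mathcal{S}\cap\mathcal{F}) = \mathcal{C}\cap\mathcal{F}$. Where you diverge is in how the closure is handled. The paper's proof ends with ``Recall that $\cl{\mathcal{S}} = \mathcal{C}$. Using the continuity of $f$, we can conclude the proof,'' which by itself only yields the inclusion $g(\mathcal{C}\cap\mathcal{F}) \subseteq \cl\mathcal{G}$; continuity gives $g(\cl E) \subseteq \cl(g(E))$, not equality, and the image of a closed set under a linear map need not be closed. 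You correctly identify this as the nontrivial direction and close it with a compactness argument: boundedness of $\mathbf{M}$ on $l_2$ gives $\trace{X_n} \le \|\mathbf{M}\|^2$, $\trace{W_n}=1$ and positive semidefiniteness bound $\|R_n\|_F$, so the gramians live in a compact subset of the finite-dimensional space $\mathbb{H}$, and a subsequential limit lands in $\mathcal{C}\cap\mathcal{F}$. This is a genuine strengthening of the argument as written in the paper (equivalently, one could note that $\mathcal{C}\cap\mathcal{F}$ itself is compact and that continuous images of compact sets are closed, which makes the two inclusions symmetric). Your proof is correct and, on the reverse inclusion, more complete than the paper's.
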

\begin{proof}
Let $\mathbf{M} = (e^{j\theta}I -A)^{-1}B$.
Then, $\mathbf{z} = \mathbf{Gw} = C \mathbf{Mw} + D \mathbf{w}$.
Now
\begin{align*}
&\Lambda(\mathbf{Gw}, \mathbf{w}) = \Lambda(C\mathbf{Mw} + D\mathbf{w}, \mathbf{w})
= \Lambda\left(\begin{bmatrix} C & D\\ 0 & I \end{bmatrix} \begin{bmatrix} \mathbf{Mw}\\
\mathbf{w}
\end{bmatrix}\right)\\
&= \begin{bmatrix} C & D\\ 0 & I \end{bmatrix} \Lambda(\mathbf{Mw},\mathbf{w}) \begin{bmatrix} C & D\\ 0 & I \end{bmatrix}^*.
\end{align*}
This shows
\BEAS
\mathcal{G} = \Big\{f\left(\begin{bmatrix} C & D\\ 0 & I\end{bmatrix} \begin{bmatrix} X & R\\R^* & W\end{bmatrix}\begin{bmatrix} C & D\\ 0 & I\end{bmatrix}^*\right) : \\
\trace{W} = 1, \begin{bmatrix} X & R\\R^* & W\end{bmatrix} \in \mathcal{S}\Big\}
\EEAS
Recall that $\cl{\mathcal{S}} = \mathcal{C}$. Using the continuity of $f$, we can conclude the proof.
\end{proof}

Since $\mathcal{C}$ is an SDP representable cone, the above characterization is much easier to handle compared to $\mathcal{G}$.
In fact, using the theorem of alternatives, we can obtain the very interesting LMI characterization of $\cl\mathcal{G} \cap \mathbb{H}_+ \neq \varnothing$.

%Recall the theorem of alternative \cite{balakrishnan2003semidefinite}.
\begin{theorem}[ALT5a]
Exactly one of the following is true.
\begin{enumerate}[(i)]
\item $\exists X \in \mathbb{H}$ such that $\mathcal{A}(X) \succeq 0$, $\mathcal{A}(X) \neq 0$, and $\mathcal{B}(X) = 0$.
\item $\exists Z \in \mathbb{H}_{++}$, $W \in \mathbb{H}$, such that $\mathcal{A}^{*}(Z) + \mathcal{B}^*(W) = 0$.
\end{enumerate}
\label{thm:alt5a}
\end{theorem}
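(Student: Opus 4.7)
The plan is to prove Theorem ALT5a by the standard two-step strategy used for theorems of alternatives: first show that (i) and (ii) cannot hold simultaneously, then show that the failure of (i) forces (ii).

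For the incompatibility of (i) and (ii), I would suppose both hold, with corresponding witnesses $X$ and $(Z,W)$. Pairing $X$ against the identity $\mathcal{A}^*(Z)+\mathcal{B}^*(W)=0$ and moving the adjoints across yields $\langle \mathcal{A}(X), Z \rangle + \langle \mathcal{B}(X), W \rangle = 0$. Because $\mathcal{B}(X)=0$ the second term vanishes, so $\langle \mathcal{A}(X), Z \rangle = 0$. But $\mathcal{A}(X)$ is a nonzero positive semidefinite matrix and $Z$ is positive definite, so the Hilbert--Schmidt inner product must be strictly positive, contradiction.

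For the direction that failure of (i) implies (ii), introduce the subspace $\mathcal{K} := \{\mathcal{A}(X) : \mathcal{B}(X) = 0\} \subseteq \mathbb{H}$. The failure of (i) is exactly the statement $\mathcal{K} \cap \mathbb{H}_+ = \{0\}$. I would then apply strict hyperplane separation between the closed subspace $\mathcal{K}$ and the compact convex set $\Omega := \{Y \in \mathbb{H}_+ : \trace{Y} = 1\}$, which is disjoint from $\mathcal{K}$ under the assumption. This produces $Z \in \mathbb{H}$ and scalars $\alpha < \beta$ with $\langle Z, K\rangle \leq \alpha$ for all $K \in \mathcal{K}$ and $\langle Z, Y\rangle \geq \beta$ for all $Y \in \Omega$. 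Since $\mathcal{K}$ is a linear subspace, the functional $\langle Z, \cdot\rangle$ must vanish identically on $\mathcal{K}$, so $Z \in \mathcal{K}^\perp$ and $\beta > 0$. Testing against rank-one elements $yy^* \in \Omega$ with $\|y\|=1$ gives $y^*Zy > 0$, whence $Z \in \mathbb{H}_{++}$. Finally, $Z \perp \mathcal{K}$ is equivalent to $\mathcal{A}^*(Z) \in \mathrm{range}(\mathcal{B}^*)$, which supplies a $W$ with $\mathcal{A}^*(Z)+\mathcal{B}^*(W)=0$ as required.

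The main obstacle is producing a strictly positive definite $Z$ rather than a merely positive semidefinite one: a direct separation of $\mathcal{K}$ from the cone $\mathbb{H}_+$ would deliver only $Z \in \mathbb{H}_+$, which is what Theorem ALT4 yields. Separating instead from the compact trace-normalized slice $\Omega$ is the crucial device; compactness is what permits strict separation (with a gap $\beta - \alpha > 0$), and the fact that every normalized rank-one matrix $yy^*$ lies in $\Omega$ is what lets us upgrade the conclusion from $Z \succeq 0$ to $Z \succ 0$.
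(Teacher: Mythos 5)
The paper itself gives no proof of Theorem ALT5a: it is stated as a black-box theorem of alternatives imported from the SDP-duality literature \cite{balakrishnan2003semidefinite} (whence the label ``ALT5a''), so there is no in-paper argument to compare against. Your proof is correct and is the standard one for this result. The easy direction is right: pairing $X$ with $\mathcal{A}^{*}(Z)+\mathcal{B}^{*}(W)=0$ gives $\trace{\mathcal{A}(X)Z}=0$, which is impossible for a nonzero $\mathcal{A}(X)\succeq 0$ against $Z\succ 0$. The hard direction is also sound: $\mathcal{K}=\{\mathcal{A}(X):\mathcal{B}(X)=0\}$ is a finite-dimensional (hence closed) subspace, failure of (i) is exactly $\mathcal{K}\cap\mathbb{H}_{+}=\{0\}$, and strict separation from the compact slice $\{Y\succeq 0:\trace{Y}=1\}$ forces the separating functional to vanish on the subspace and to be strictly positive on every unit rank-one matrix $yy^{*}$, giving $Z\succ 0$; then $Z\in\mathcal{K}^{\perp}$ together with $(\ker\mathcal{B})^{\perp}=\Range(\mathcal{B}^{*})$ produces $W$. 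You correctly identify the one essential device --- separating from the trace-normalized compact base of the cone rather than from $\mathbb{H}_{+}$ itself --- which is precisely what upgrades the semidefinite multiplier of ALT4 to the definite multiplier required here.
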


%Now using the theorem of alternative, the Theorem \ref{thm:alt5a}, we obtain the following result.
\begin{proposition}
Exactly one of the following is true.
\begin{enumerate}[(i)]
\item $\cl{\mathcal{G}} \cap \mathbb{H}_+ = \varnothing$.
\item There exists $P \in \mathbb{H}$, $Y \succ 0$ such that
\begin{equation}
\hspace{-5mm}
\begin{bmatrix}
A^*PA-P & A^*PB\\
B^*PA & B^*PB
\end{bmatrix} + 
\begin{bmatrix}
C & D\\
0 & I
\end{bmatrix}^*
f^{*}(Y)
\begin{bmatrix}
C & D\\
0 & I
\end{bmatrix}
\prec 0
\label{eq:scaled-small-gain}
\end{equation}
\end{enumerate}
\label{prop:multipler}
\end{proposition}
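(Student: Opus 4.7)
The plan is to cast the condition $\cl{\mathcal{G}} \cap \mathbb{H}_+ \neq \varnothing$ as alternative (i) of Theorem \ref{thm:alt5a} for a carefully chosen pair of linear operators, so that alternative (ii) dualizes to the strict LMI \eqref{eq:scaled-small-gain}. Concretely, for $V \in \mathbb{H}^{n+m}$ written as $V = \begin{bmatrix} X & R \\ R^* & W \end{bmatrix}$, I would introduce
\begin{align*}
\mathcal{A}(V) &:= \begin{bmatrix} V & 0 \\ 0 & f\!\left(\begin{bmatrix} C & D \\ 0 & I \end{bmatrix} V \begin{bmatrix} C & D \\ 0 & I \end{bmatrix}^{\!*}\right) \end{bmatrix},\\
\mathcal{B}(V) &:= \begin{bmatrix} A & B \end{bmatrix} V \begin{bmatrix} A^* \\ B^* \end{bmatrix} - \begin{bmatrix} I & 0 \end{bmatrix} V \begin{bmatrix} I \\ 0 \end{bmatrix},
\end{align*}
so that $\mathcal{A}(V) \succeq 0$ packages both $V \succeq 0$ and $f(\cdots) \succeq 0$, while $\mathcal{B}(V) = 0$ is the affine equality cutting out $\mathcal{C}$.

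The first step is to match alternative (i) of ALT5a with $\cl{\mathcal{G}} \cap \mathbb{H}_+ \neq \varnothing$. By the proposition preceding the statement, the latter is equivalent to the existence of $V \in \mathcal{C}$ with $\trace{W} = 1$ and $f(\cdots) \succeq 0$. Both membership in $\mathcal{C}$ and the PSD condition on $f(\cdots)$ are homogeneous in $V$, so the normalization $\trace{W} = 1$ can be replaced by the single requirement $V \neq 0$, provided one can argue that $V \neq 0$ forces $W \neq 0$. That reduction is precisely where Schur stability of $A$ enters: if $W = 0$, positive semidefiniteness of $V$ forces $R = 0$, and the gramian equation collapses to $A X A^* = X$, which by $\rho(A) < 1$ forces $X = 0$ and hence $V = 0$. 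Linearity of $f$ then makes $\mathcal{A}(V) \neq 0$ equivalent to $V \neq 0$, so the three clauses of ALT5a(i) coincide exactly with non-emptiness of $\cl{\mathcal{G}} \cap \mathbb{H}_+$.

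The second step is a mechanical dualization. Writing $Z = \begin{bmatrix} Z_{11} & Z_{12} \\ Z_{12}^* & Z_{22} \end{bmatrix} \succ 0$ with block structure matching the range of $\mathcal{A}$, the block-diagonal form of $\mathcal{A}(V)$ annihilates the off-diagonal block of $Z$ in the trace pairing, and routine manipulations yield
\begin{align*}
\mathcal{A}^*(Z) &= Z_{11} + \begin{bmatrix} C & D \\ 0 & I \end{bmatrix}^{\!*} f^*(Z_{22}) \begin{bmatrix} C & D \\ 0 & I \end{bmatrix},\\
\mathcal{B}^*(P) &= \begin{bmatrix} A^*PA - P & A^*PB \\ B^*PA & B^*PB \end{bmatrix}.
\end{align*}
Since the diagonal blocks of any positive definite $Z$ are positive definite, and conversely $Z_{11}, Z_{22} \succ 0$ with $Z_{12} = 0$ yields a positive definite $Z$, the relation $\mathcal{A}^*(Z) + \mathcal{B}^*(P) = 0$ with $Z \succ 0$ is equivalent to the strict LMI \eqref{eq:scaled-small-gain} under the identification $Y = Z_{22}$. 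Applying Theorem \ref{thm:alt5a} then delivers the exclusive dichotomy of the proposition.

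The main obstacle I anticipate is the bookkeeping around the non-triviality clause $\mathcal{A}(V) \neq 0$ of ALT5a(i). Aligning it cleanly with the normalization $\trace{W} = 1$ baked into the definition of $\mathcal{G}$ is where the Schur stability of $A$ genuinely enters the argument; the rest is just computing adjoints under the Frobenius pairing and reading off the LMI.
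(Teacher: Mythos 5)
Your proposal is correct and follows essentially the same route as the paper: the identical operators $\mathcal{A}$ and $\mathcal{B}$, the same identification of $\cl{\mathcal{G}} \cap \mathbb{H}_+ \neq \varnothing$ with alternative (i) of Theorem \ref{thm:alt5a}, and the same adjoint computation yielding \eqref{eq:scaled-small-gain}. You in fact supply two details the paper glosses over --- the Schur-stability argument showing $V \neq 0$ forces $\trace{W} > 0$ so the rescaling to $\trace{W}=1$ is legitimate, and the block structure of $Z$ in alternative (ii) --- both of which check out.
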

\begin{proof}
Let
\BEAS
\mathcal{A}(V) &=& \begin{bmatrix} V & \\ & f\left(\begin{bmatrix} C & D\\0 &I\end{bmatrix} V \begin{bmatrix} C & D\\0 & I\end{bmatrix}^*\right) \end{bmatrix}\\
\mathcal{B}(V) &=& \begin{bmatrix} A & B\end{bmatrix}V\begin{bmatrix} A & B\end{bmatrix}^* -  \begin{bmatrix} I & 0\end{bmatrix}V\begin{bmatrix} I & 0\end{bmatrix}^*.
\EEAS
Then $\cl{\mathcal{G}} \cap \mathbb{H}_+ =\varnothing$ if and only if there exists $V$ such that $\mathcal{A}(V) \succeq 0$, but $\mathcal{A}(V) \neq 0$, and $\mathcal{B}(V) = 0$. We can resale $V$, if necessary, to satisfy the trace condition.
Notice that the adjoint of the right bottom block of $\mathcal{A}(V)$ is given by
\BEAS
&&\ip{Y}{(\mathcal{A}(V))_{22}} = 
\trace{ Y  f\left(\begin{bmatrix} C & D\\0 & I\end{bmatrix} V \begin{bmatrix} C & D\\0 & I\end{bmatrix}^*\right)}\\
&=& \trace{f^{*}(Y) \begin{bmatrix} C & D\\0 & I\end{bmatrix} V \begin{bmatrix} C & D\\0 & I\end{bmatrix}^*}\\
&=& \trace{ \begin{bmatrix} C & D\\0 & I\end{bmatrix}^*f^{*}(Y)\begin{bmatrix} C & D\\0 & I\end{bmatrix}V}.
\EEAS
Using this fact with Theorem \ref{thm:alt5a}, we can conclude the proof.
\end{proof}

The above Proposition is indeed very interesting. It presents the equivalent LMI characterization of separating two sets.
Let us apply the above result to the previous examples of $\Delta$.

For Example \ref{ex:fullLTV}, $f(V) = \trace{\begin{bmatrix} I & 0\\0 & -I\end{bmatrix} V}$, and $f^*(y) = \begin{bmatrix} yI & 0 \\ 0 & -yI \end{bmatrix}$,
where the domain of $f^*$ is $\mathbb{H}^1$.
In this case, condition \eqref{eq:scaled-small-gain} becomes
\begin{align*}
&\begin{bmatrix}
A & B\\
0 & I
\end{bmatrix}^*
\begin{bmatrix}
P & 0\\
0 & -P
\end{bmatrix}
\begin{bmatrix}
A & B\\
0 & I
\end{bmatrix} \\
&+ 
\begin{bmatrix}
C & D\\
0 & I
\end{bmatrix}^*
\begin{bmatrix}
yI & 0\\
0 & -yI
\end{bmatrix}
\begin{bmatrix}
C & D\\
0 & I
\end{bmatrix} \prec 0,
\end{align*}
and since $y > 0$, by multiplying $1/y$ to both sides, we recover the KYP lemma for $\|\mathbf{G}\|_{\infty} < 1$.

For Example \ref{ex:scalarLTV}, $f(V) = \begin{bmatrix}I & 0\end{bmatrix}V\begin{bmatrix}I \\ 0\end{bmatrix} - \begin{bmatrix}0 & I\end{bmatrix}V\begin{bmatrix}0\\ I\end{bmatrix}$, and $f^*(Y) = \begin{bmatrix} Y & 0 \\ 0 & -Y \end{bmatrix}$, where the domain of $f^*$ is $\mathbb{H}^m$.
In this case, condition \eqref{eq:scaled-small-gain} becomes
\begin{align*}
&\begin{bmatrix}
A & B\\
0 & I
\end{bmatrix}^*
\begin{bmatrix}
P & 0\\
0 & -P
\end{bmatrix}
\begin{bmatrix}
A & B\\
0 & I
\end{bmatrix} \\
&+ 
\begin{bmatrix}
C & D\\
0 & I
\end{bmatrix}^*
\begin{bmatrix}
Y & 0\\
0 & -Y
\end{bmatrix}
\begin{bmatrix}
C & D\\
0 & I
\end{bmatrix} \prec 0,
\end{align*}

Since $Y \succ 0$, by left and right multiplying $Y^{-1/2}$ to the above expression, we can recover $\|Y^{-1/2}\mathbf{G}Y^{1/2}\|_{\infty} < 1$, which is a scaled small gain test.

Finally, for Example \ref{ex:IQC}, we have $f(V) = \trace{\Pi V}$, and $f^*(y) = y\Pi$, where the domain is $\mathbb{H}^1$.
In this case, condition \eqref{eq:scaled-small-gain} becomes
\BEAS
&&\begin{bmatrix}
A^*PA-P & A^*PB\\
B^*PA & B^*PB
\end{bmatrix} 
+
\begin{bmatrix}
C & D\\
0 & I
\end{bmatrix}^*
\Pi
\begin{bmatrix}
C & D\\
0 & I
\end{bmatrix} \prec 0.
\EEAS

For block diagonal structure, such as Example \ref{example:two-block} we can also apply our approach to recover the block diagonal small gain test as in \cite{shamma1994robust}.

From robust control theory, we know that the scaled small gain test provides a sufficient and necessary test for robust stability in certain cases \cite{Dullerud:2010tc}. 
In fact, all the examples we provide fall in to those classes.
Notice that we have shown that the scaled small gain test is a necessary and sufficient condition for $\cl{\mathcal{G}} \cap \mathbb{H}_+ = \varnothing$.
Therefore, we obtain the following chain of equivalent statements when the scaled small gain test becomes the equivalent condition for the robust stability of $\mathbf{G}$.
\BEAS
&&\cl{\mathcal{G}} \cap \mathbb{H}_+ = \varnothing \overset{(a)}{\Leftrightarrow} \|\Theta^{-1} \mathbf{G} \Theta\|_{\infty} < 1 \\
&\overset{(b)}\Leftrightarrow &\text{Robust stability of $\mathbf{G}$}
\EEAS
However, proving (b) from the scaled small gain test is not a trivial task whereas  (a) is from a standard machinery.
We strongly believe that there exists a direct proof between $(\cl{\mathcal{G}} \cap \mathbb{H}_+ = \varnothing) {\Leftrightarrow} (\text{Robust stability of $\mathbf{G}$})$ in the style of \cite{safonov1980stability} without relying on the complicated argument, and this is currently under investigation.

Another important observation is that if $\cl{\mathcal{G}} \cap \mathbb{H}_+ \neq \varnothing$, then we can use SDP to find such a $V$.
Then using Lemma \ref{prop:approximation}, we can find a pair $(\mathbf{w}, \mathbf{z})$ that approximately satisfies
\BEAS
\mathbf{z} = \mathbf{Gw}\\
\mathbf{w} = \Delta\mathbf{z}.
\EEAS
Notice that this pair $(\mathbf{w}, \mathbf{z})$ disproves the robust stability.

In conclusion, we have shown how our main result can be used to derive the scaled small gain test without using the commutant, $\Theta$.
Moreover, we show that the exact implication of the scaled small gain test, $\cl{\mathcal{G}} \cap \mathbb{H}_+ = \varnothing$, using our results.
This suggests that there may exist deeper connection between robust stability and the set relationship $\cl{\mathcal{G}} \cap \mathbb{H}_+ = \varnothing$, and we are currently investigating their exact relationship.

\section{Conclusion}
In this paper, we propose an alternative, reverse direction of theoretical development for robust control theory.
Based on our Lemma \ref{prop:approximation}, an SDP representation of a set of gramians, we show that the robust analysis and stability question can be directly formulated as a primal optimization.
Moreover, we show that the well-known results in robust control theory can be obtained via SDP duality, an arguably simple machinery to prove many interesting results,
and this shows that our approach is a primal formulation of robustness analysis.
Therefore, we believe that our paper provides an alternative "primal-dual" picture in robust control theory, and this new development not only opens up the new  research direction, but also enhances pedagogy.

\section{Appendix: Proof of the main result}
This section we present the main proof of our proposition \ref{prop:approximation}, with technical lemmas that need to prove our result.

\subsection{Preliminaries from linear algebra}
The following results from the linear algebra are used to prove the main result.
\begin{proposition}
For $x, y \in \mathbb{C}^{n}$, $\|xy^*\|_F = \|x\|_2\|y\|_2$.
\end{proposition}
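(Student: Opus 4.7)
The plan is to expand the Frobenius norm via its trace definition $\|M\|_F^2 = \Tr(M^*M)$ and exploit the rank-one structure of $xy^*$. First I would write
\[
(xy^*)^*(xy^*) = y x^* x y^* = \|x\|_2^2 \, y y^*,
\]
using the fact that $x^* x = \|x\|_2^2$ is a scalar and can be pulled out. Taking the trace gives $\Tr\!\left((xy^*)^*(xy^*)\right) = \|x\|_2^2 \, \Tr(y y^*) = \|x\|_2^2 \, \|y\|_2^2$, and taking the square root finishes the argument.

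As an alternative (essentially equivalent) route, I could work entry-wise: since $(xy^*)_{ij} = x_i \overline{y_j}$, we have $|(xy^*)_{ij}|^2 = |x_i|^2 |y_j|^2$, and summing yields
\[
\|xy^*\|_F^2 = \sum_{i,j} |x_i|^2 |y_j|^2 = \Bigl(\sum_i |x_i|^2\Bigr)\Bigl(\sum_j |y_j|^2\Bigr) = \|x\|_2^2 \|y\|_2^2.
\]
Either route is a one-line calculation, so I would present the trace version for brevity.

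There is no real obstacle here; the only thing to be slightly careful about is the order of factors under complex conjugation (writing $y^*$ rather than $y^T$, and making sure $x^*x$ is the scalar $\|x\|_2^2$ rather than the outer product $xx^*$). This proposition is stated as a preliminary exactly because it is used repeatedly in the forthcoming approximation argument for Lemma~\ref{prop:approximation}, where rank-one increments $u_k u_k^*$ are bounded in Frobenius norm by $\|u_k\|_2^2$.
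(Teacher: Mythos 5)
Your trace-based computation is exactly the paper's own one-line proof (the paper writes $\trace{(xy^*)^*(yx^*)}$, an evident typo for $\trace{(xy^*)^*(xy^*)}$, and concludes identically). The proposal is correct and takes essentially the same approach; the entry-wise alternative you mention is a harmless equivalent.
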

\begin{proof}
$\|xy^*\|^2_F = \trace{(xy^*)^*(yx^*)} = \|x\|_2^2\|y\|_2^2$.
\end{proof}
\begin{theorem}[Gelfand, 1941]
For any matrix norm $\|\cdot\|$, 
\BEAS
\lim_{n\rightarrow \infty} \|A^n\|^{1/n} = \rho(A).
\EEAS
\label{thm:gelfand}
\end{theorem}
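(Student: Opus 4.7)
The plan is to prove the two inequalities $\rho(A) \le \liminf_n \|A^n\|^{1/n}$ and $\limsup_n \|A^n\|^{1/n} \le \rho(A)$, which together yield the claim. A preliminary reduction is that any two norms $\|\cdot\|_\alpha,\|\cdot\|_\beta$ on the finite-dimensional space $\mathbb{C}^{n\times n}$ are equivalent, so $c\|X\|_\alpha\le\|X\|_\beta\le C\|X\|_\alpha$ for constants $0<c\le C$ depending only on the two norms. Then $c^{1/n}\|A^n\|_\alpha^{1/n}\le\|A^n\|_\beta^{1/n}\le C^{1/n}\|A^n\|_\alpha^{1/n}$, and since $c^{1/n},C^{1/n}\to 1$, the sequences $\|A^n\|_\alpha^{1/n}$ and $\|A^n\|_\beta^{1/n}$ share the same limit inferior and superior. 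It therefore suffices to establish the result for one convenient norm; I will use the operator (spectral) norm $\|\cdot\|_2$ induced by the Euclidean vector norm on $\mathbb{C}^n$, which is submultiplicative.

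For the lower bound, pick an eigenvalue $\lambda$ of $A$ with $|\lambda|=\rho(A)$ and a corresponding unit eigenvector $v$. Then $A^n v=\lambda^n v$, so $\|A^n\|_2\ge\|A^n v\|_2/\|v\|_2=|\lambda|^n=\rho(A)^n$, hence $\|A^n\|_2^{1/n}\ge\rho(A)$ for every $n$, giving $\liminf_n\|A^n\|_2^{1/n}\ge\rho(A)$.

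For the upper bound, fix $\varepsilon>0$ and set $B=A/(\rho(A)+\varepsilon)$, which has $\rho(B)<1$. I would then show $B^n\to 0$ via the Jordan decomposition $B=PJP^{-1}$: each Jordan block has the form $J_i=\lambda_i I+N_i$ with $|\lambda_i|<1$ and $N_i$ nilpotent of some order $d_i\le n$. By the binomial theorem (the summands commute),
\[
J_i^n=\sum_{k=0}^{d_i-1}\binom{n}{k}\lambda_i^{\,n-k}N_i^k,
\]
and each term tends to zero because $\binom{n}{k}$ grows only polynomially in $n$ while $|\lambda_i|^n$ decays geometrically. Hence $J^n\to 0$, so $B^n=PJ^nP^{-1}\to 0$ and in particular the sequence $\|B^n\|_2$ is bounded by some $K=K(\varepsilon)<\infty$. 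Unwinding the scaling, $\|A^n\|_2\le K(\rho(A)+\varepsilon)^n$, so $\|A^n\|_2^{1/n}\le K^{1/n}(\rho(A)+\varepsilon)$, and taking $\limsup$ as $n\to\infty$ yields $\limsup_n\|A^n\|_2^{1/n}\le\rho(A)+\varepsilon$. Letting $\varepsilon\downarrow 0$ closes the argument.

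The only nontrivial step is showing $B^n\to 0$ when $\rho(B)<1$, which relies on the Jordan form. An alternative I would consider is the complex-analytic route: expand $(I-zA)^{-1}=\sum_{n\ge 0}z^n A^n$, observe that the left-hand side is holomorphic in $\mathbb{C}$ except at the inverse eigenvalues of $A$ (so the radius of convergence is $1/\rho(A)$), and then invoke Cauchy--Hadamard to identify this radius with $1/\limsup_n\|A^n\|^{1/n}$. That route avoids Jordan form but imports more machinery, so I would present the Jordan-based proof as the primary one and flag the analytic alternative as a remark.
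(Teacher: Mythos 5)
Your proof is correct: the norm-equivalence reduction, the eigenvector argument for $\liminf_n\|A^n\|^{1/n}\ge\rho(A)$, and the rescaling-plus-Jordan-form argument for $\limsup_n\|A^n\|^{1/n}\le\rho(A)+\varepsilon$ are all sound, including the treatment of nilpotent Jordan blocks (the only cosmetic quibble is that the binomial sum should be read with $\binom{n}{k}=0$ for $k>n$, which you implicitly do). There is nothing in the paper to compare against: Theorem \ref{thm:gelfand} is quoted as a classical result attributed to Gelfand and used as a black box in Proposition \ref{prop:A-l1}, with no proof supplied, so your write-up simply fills in a standard, complete argument for the cited fact.
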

\begin{proposition}
Suppose $\rho(A) < 1$. Then, for any matrix norm $\|\cdot\|$, $\|A^k\| \in l_1$.
\label{prop:A-l1}
\end{proposition}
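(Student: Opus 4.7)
The plan is to reduce the $l_1$ summability of $\{\|A^k\|\}_{k \geq 0}$ to a geometric-series comparison via Gelfand's formula (Theorem \ref{thm:gelfand}). Since $\rho(A) < 1$ by hypothesis, I would first pick any real number $r$ with $\rho(A) < r < 1$; for concreteness one can take $r = (1+\rho(A))/2$.

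Applying Theorem \ref{thm:gelfand} with this choice of $r$ and unwinding the definition of the limit yields an index $N$ such that $\|A^k\|^{1/k} < r$, and therefore $\|A^k\| < r^k$, for every $k \geq N$. Splitting the series at $N$, the head $\sum_{k=0}^{N-1}\|A^k\|$ is a finite sum of finite numbers, while the tail is dominated by the convergent geometric series $\sum_{k=N}^{\infty} r^k = r^N/(1-r) < \infty$. Adding the two pieces gives $\sum_{k=0}^{\infty} \|A^k\| < \infty$, which is precisely the statement $\|A^k\| \in l_1$.

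There is no real obstacle here: once Gelfand's formula is invoked, the result collapses to the elementary convergence of a geometric series. The phrase ``for any matrix norm'' in the statement requires no additional argument, because Theorem \ref{thm:gelfand} is itself stated for any matrix norm, so no equivalence-of-norms step is needed; alternatively, one could appeal to the equivalence of all norms on the finite-dimensional space of $n \times n$ matrices to transfer summability from, say, the Frobenius norm to any other.
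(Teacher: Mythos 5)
Your proof is correct and follows essentially the same route as the paper: both invoke Gelfand's formula to bound $\|A^k\|$ by $r^k$ for large $k$ with $r = (1+\rho(A))/2 = \rho(A) + (1-\rho(A))/2 < 1$, and then compare the tail to a convergent geometric series. No further comment is needed.
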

\begin{proof}
Let $\epsilon = (1-\rho(A)) / 2 > 0$. Then from the Theorem \ref{thm:gelfand}, there exists $N \in \mathbb{N}$ such that
\BEAS
\|A^k\| < (\rho(A)+\epsilon)^k,
\EEAS
for all $k \geq N$. 
\BEAS
\sum_{k=0}^{\infty} \|A^k\| &=& \sum_{k=0}^N \|A^k\| + \sum_{k=N+1}^{\infty} \|A^k\|\\
	&<&\sum_{k=0}^N \|A^k\| + \sum_{k=N+1}^{\infty} (\rho(A)+\epsilon)^k.
\EEAS
Since $\rho(A)+\epsilon < 1$, the second term is finite. Therefore, $\|A^k\| \in l_1$.
\end{proof}

\begin{lemma}[Rantzer, 1996]
Let $F,G$ complex matrices with same dimension.
Then $FF^* = GG^*$ if and only if there exists a unitary matrix $U \in \mathbb{C}^{k \times k}$ such that $F = GU$.
\label{lemma:ra}
\end{lemma}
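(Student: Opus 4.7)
The plan is to handle the two directions separately, with the hard work concentrated in a one-step partial-isometry extension argument. The reverse implication is immediate: if $F = GU$ with $UU^* = I$, then $FF^* = GU(GU)^* = GUU^*G^* = GG^*$, requiring nothing beyond the definition of unitarity.

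For the forward direction, I would suppose $F, G \in \mathbb{C}^{n \times k}$ satisfy $FF^* = GG^*$ and pass to the adjoints $\tilde F := F^*$ and $\tilde G := G^*$, viewed as linear maps $\mathbb{C}^n \to \mathbb{C}^k$. The hypothesis becomes $\tilde F^* \tilde F = \tilde G^* \tilde G$, equivalently $\|\tilde F x\|_2 = \|\tilde G x\|_2$ for every $x \in \mathbb{C}^n$. This norm identity forces $\Null(\tilde F) = \Null(\tilde G)$, so by rank--nullity the images $\Range(\tilde F)$ and $\Range(\tilde G)$ are subspaces of $\mathbb{C}^k$ of the same dimension $r$. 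I would then define $V_0 : \Range(\tilde G) \to \Range(\tilde F)$ by $V_0(\tilde G x) := \tilde F x$; the same norm identity shows $V_0$ is both well-defined (if $\tilde G x = \tilde G y$ then $\tilde F(x-y) = 0$) and a linear isometry.

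To finish, I would extend $V_0$ to a unitary $V$ on all of $\mathbb{C}^k$ by choosing any unitary identification of the $(k-r)$-dimensional orthogonal complements $\Range(\tilde G)^\perp$ and $\Range(\tilde F)^\perp$. The resulting $V$ satisfies $V\tilde G = \tilde F$, i.e., $VG^* = F^*$; taking adjoints yields $F = GV^*$, so $U := V^*$ is the desired unitary. The main obstacle is the extension step: one must know that $\Range(\tilde F)$ and $\Range(\tilde G)$ have the same codimension in $\mathbb{C}^k$, which is precisely what the kernel identity $\Null(\tilde F) = \Null(\tilde G)$ supplies. Everything else is routine bookkeeping, and the proof deliberately avoids any SVD or spectral decomposition.
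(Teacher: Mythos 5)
Your proposal is correct, and it is more than the paper itself offers: the paper's ``proof'' of Lemma~\ref{lemma:ra} is simply a citation to Rantzer's KYP paper, so any self-contained argument is necessarily a different route. Both directions of your argument check out. The reverse direction is the one-line computation you give. For the forward direction, the passage to adjoints turns $FF^*=GG^*$ into the pointwise norm identity $\|F^*x\|_2=\|G^*x\|_2$, which correctly yields $\Null(F^*)=\Null(G^*)$, hence equal ranks, a well-defined linear isometry $V_0$ from $\Range(G^*)$ onto $\Range(F^*)$, and a unitary extension $V$ obtained by matching the equidimensional orthogonal complements; taking adjoints of $VG^*=F^*$ gives $F=GV^*$ with $U=V^*$ unitary. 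The one point worth making explicit is that $V_0$ is surjective onto $\Range(F^*)$ (every $F^*x$ is by definition in its image), so the direct sum of $V_0$ with the complementary isometry is indeed a bijective isometry of $\mathbb{C}^k$, i.e., unitary; and that a linear isometry of complex inner-product spaces preserves inner products by polarization, so ``isometry'' suffices. Your approach buys an elementary, SVD-free proof that would let the paper be self-contained; the cost is only the small bookkeeping of the extension step, which you have handled correctly.
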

\begin{proof}
See \cite{Rantzer:2011wn}.
\end{proof}

The following result about a convex cone $\mathcal{C}$ is a direct consequence of the above lemma, and the results states that extreme points of $\mathcal{C}$ are rank one matrices.
\begin{proposition}[Rank one decomposition]
For all $V \in \mathcal{C}$, there exists a set of matrices $V_1,\cdots,V_{n+m} \in \mathcal{C}$ such that
$V = \sum_{k=1}^{n+m} V_k$, and $\rank{V_k} \leq 1$ for all $k = 1,\cdots,n+m$.
\label{lemma:rank1-1}
\end{proposition}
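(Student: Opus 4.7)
The plan is to use the rank-$r$ factorization $V = FF^*$ together with Lemma \ref{lemma:ra} to expose a natural eigenbasis in which the constraint decouples column by column. Concretely, write $V = FF^*$ with $F = \begin{bmatrix} F_1 \\ F_2 \end{bmatrix} \in \mathbb{C}^{(n+m) \times r}$, where $r = \rank{V} \leq n+m$, $F_1 \in \mathbb{C}^{n \times r}$, and $F_2 \in \mathbb{C}^{m \times r}$. Then the defining constraint of $\mathcal{C}$ becomes
\[
(AF_1 + BF_2)(AF_1 + BF_2)^* = F_1 F_1^*.
\]

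By Lemma \ref{lemma:ra}, there exists a unitary $U \in \mathbb{C}^{r \times r}$ with $AF_1 + BF_2 = F_1 U$. The crucial step is to diagonalize $U$: since $U$ is unitary, write $U = Q \Sigma Q^*$ with $Q$ unitary and $\Sigma = \diag(\sigma_1, \ldots, \sigma_r)$ where $|\sigma_k| = 1$ for every $k$. Set $G := FQ$ with blocks $G_1 = F_1 Q$ and $G_2 = F_2 Q$. Since $Q$ is unitary, $V = GG^*$, and
\[
AG_1 + BG_2 = (AF_1 + BF_2)Q = F_1 U Q = F_1 Q \Sigma = G_1 \Sigma.
\]

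Now let $g_k \in \mathbb{C}^{n+m}$ denote the $k$-th column of $G$, partitioned as $g_k = \begin{bmatrix} g_{1,k} \\ g_{2,k} \end{bmatrix}$. The displayed identity reads column-wise as $A g_{1,k} + B g_{2,k} = \sigma_k g_{1,k}$, so
\[
(A g_{1,k} + B g_{2,k})(A g_{1,k} + B g_{2,k})^* = |\sigma_k|^2\, g_{1,k} g_{1,k}^* = g_{1,k} g_{1,k}^*,
\]
which is precisely the condition that the rank-one matrix $V_k := g_k g_k^*$ lies in $\mathcal{C}$. Finally, $V = GG^* = \sum_{k=1}^{r} g_k g_k^* = \sum_{k=1}^{r} V_k$; padding with zero matrices (trivially in $\mathcal{C}$) if $r < n+m$ yields the desired decomposition into at most $n+m$ rank-one elements of $\mathcal{C}$.

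The only real content is the observation that the Rantzer unitary $U$ can be diagonalized to decouple the constraint column by column; everything else is bookkeeping. The main thing to watch is that the diagonalization step genuinely requires unitarity (so that the eigenvalues have modulus exactly one, giving $|\sigma_k|^2 = 1$), which is why the Rantzer lemma, rather than an arbitrary factorization identity, is essential here.
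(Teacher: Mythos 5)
Your proof is correct and is exactly the argument the paper intends: it states Lemma \ref{lemma:ra} immediately beforehand and calls the proposition a ``direct consequence'' of it, deferring the details to \cite{you2014h}, and your factorization $V=FF^*$ followed by unitary diagonalization of the Rantzer unitary $U$ is the standard way that consequence is derived. No gaps; the only content beyond bookkeeping is, as you note, that unitarity forces $|\sigma_k|=1$ so the constraint survives column by column.
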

\begin{proof}
See \cite{you2014h}.
\end{proof}

\subsection{Technical lemmas for the main result}
In this section, we derive technical results to prove the main result of this paper. 
The basic idea behind the following results is to bound the error terms arise in the proof of our main result.

\begin{proposition}
Suppose $\mathbf{w}$ has finite number of non-zero entries. 
For any $\varepsilon > 0$, there exists $N \in \mathbb{N}$, such that
\BEAS
\left\|\Lambda(\mathbf{Mw}, \mathbf{w}) - \Lambda_n(\mathbf{Mw}, \mathbf{w})\right\|_F < \varepsilon,
\EEAS
for all $n \geq N$.
\label{prop:finiteapproximiation}
\end{proposition}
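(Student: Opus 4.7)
The plan is to exploit the two sources of decay: the assumption that $\mathbf{w}$ is eventually zero, and the Schur stability of $A$, which forces the state trajectory $\mathbf{x} = \mathbf{Mw}$ to decay geometrically after $\mathbf{w}$ switches off. First I would pick $K \in \mathbb{N}$ such that $w_k = 0$ for all $k \geq K$; such a $K$ exists by hypothesis. Then the recursion \eqref{eq:state} reduces to $x_{k+1} = Ax_k$ for $k \geq K$, so $x_k = A^{k-K} x_K$ for all $k \geq K$.

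Next, I would estimate the tail of the gramian directly using the triangle inequality in Frobenius norm. For $n \geq K$,
\begin{align*}
\left\| \Lambda(\mathbf{Mw},\mathbf{w}) - \Lambda_n(\mathbf{Mw},\mathbf{w}) \right\|_F
&= \left\| \sum_{k=n}^{\infty} \begin{bmatrix} x_k \\ w_k \end{bmatrix} \begin{bmatrix} x_k \\ w_k \end{bmatrix}^* \right\|_F \\
&\leq \sum_{k=n}^{\infty} \left\| \begin{bmatrix} x_k \\ 0 \end{bmatrix} \begin{bmatrix} x_k \\ 0 \end{bmatrix}^* \right\|_F = \sum_{k=n}^{\infty} \|x_k\|_2^2,
\end{align*}
where I used the first proposition in the appendix ($\|xy^*\|_F = \|x\|_2 \|y\|_2$) and the fact that $w_k = 0$ for $k \geq n \geq K$. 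Substituting $x_k = A^{k-K} x_K$ and using submultiplicativity of the induced norm gives the bound $\|x_K\|_2^2 \sum_{k=n-K}^{\infty} \|A^k\|^2$.

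The final step is to show that the series $\sum_{k=0}^{\infty} \|A^k\|^2$ converges, which then implies its tail can be made smaller than $\varepsilon / \max(1, \|x_K\|_2^2)$ by taking $n$ sufficiently large. This follows from Gelfand's theorem (Theorem \ref{thm:gelfand}): pick any $\eta$ with $\rho(A) < \eta < 1$; then $\|A^k\| \leq \eta^k$ for all sufficiently large $k$, so $\|A^k\|^2 \leq \eta^{2k}$ is summable (essentially the same argument as in Proposition \ref{prop:A-l1}, applied with the exponent $2$). Choosing $N$ so that the resulting tail is below $\varepsilon$, and also $N \geq K$, completes the proof.

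There is no real obstacle here; the only minor care needed is to keep the block $\begin{bmatrix} x_k \\ w_k \end{bmatrix}$ consistent (handling the case $x_K = 0$ trivially) and to invoke the stability of $A$ in a form that gives summability of $\|A^k\|^2$ rather than merely of $\|A^k\|$, but both follow from the same Gelfand estimate used earlier in the appendix.
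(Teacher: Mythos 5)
Your proposal is correct and follows essentially the same route as the paper's own proof: isolate the index $K$ (the paper's $T$) beyond which $\mathbf{w}$ vanishes, note that the state then evolves freely as $A^{k-K}x_K$, bound the Frobenius norm of the tail by $\sum_k \|x_k\|_2^2$ via the rank-one identity, and invoke Gelfand's theorem to get summability of $\|A^k\|^2$. The only cosmetic difference is that you send the tail of the convergent series to zero while the paper factors out $\|A^N x_T\|_2^2$ and sends that factor to zero; these are the same estimate.
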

\begin{proof}
Let $\mathbf{x} = \mathbf{Mw}$, and consider $T \in \mathbb{N}$ such that $w_k = 0$, for all $k \geq T$.
For $N \geq T$, we have
\BEAS
&&\left\|\Lambda(\mathbf{Mw}, \mathbf{w}) - \Lambda_{N}(\mathbf{Mw}, \mathbf{w})\right\|_F\\
&=& \left\|\sum_{k=N}^{\infty} A^{k}x_T(A^{k}x_T)^*\right\|_F
\leq \sum_{k=N}^{\infty} \|A^{k}x_T\|_2^2\\
&=& \|A^{N}x_T\|_2^2\sum_{k=0}^{\infty} \|A^k\|_2^2
\EEAS
Since $\|A^k\|_2 \in l_1 \subset l_2$, the infinite sum is finite. In addition, from the Theorem \ref{thm:gelfand}, $\|A^{N}x_T\|_F \rightarrow 0$ as $N \rightarrow \infty$.
Therefore by choosing $N$ sufficiently large, we can conclude the proof.
\end{proof}

\begin{proposition}
Let $\mathbf{x}, \mathbf{w} \in l_2$, and $\mathbf{y} \in l_2$ such that $y_k = A^k y_0$ for all $k$.
Then there exists a constant $C$ such that
\BEAS
&&\left\|\Lambda(\mathbf{x} + \mathbf{y},\mathbf{w}) - \Lambda(\mathbf{x},\mathbf{w})\right\| \\
&\leq& C\max\{(|\mathbf{x}\|_{\infty} + \|\mathbf{w}\|_{\infty})\|y_0\|_{2}, \|y_0\|_{2}^2\}.
\EEAS
\label{prop:A-perturbation}
\end{proposition}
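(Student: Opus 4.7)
The plan is to expand the difference of gramians block-wise, apply the Frobenius-norm triangle inequality, and then use the summability of $\|A^k\|_2$ from Proposition \ref{prop:A-l1} to control each of the resulting scalar sums.

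First I would write
\begin{equation*}
\Lambda(\mathbf{x}+\mathbf{y},\mathbf{w})-\Lambda(\mathbf{x},\mathbf{w}) = \sum_{k=0}^{\infty}\begin{bmatrix} y_kx_k^* + x_ky_k^* + y_ky_k^* & y_kw_k^* \\ w_ky_k^* & 0 \end{bmatrix},
\end{equation*}
so the $(2,2)$ block vanishes and the rest splits into five rank-one pieces per index $k$. By the triangle inequality in Frobenius norm together with the preliminary identity $\|ab^*\|_F=\|a\|_2\|b\|_2$, the norm of the difference is bounded by a sum of $2\|x_k\|_2\|y_k\|_2$, $\|y_k\|_2^2$, and $2\|w_k\|_2\|y_k\|_2$ over $k$.

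Next I would substitute $y_k = A^k y_0$, giving $\|y_k\|_2 \leq \|A^k\|_2 \|y_0\|_2$. Proposition \ref{prop:A-l1} asserts that $(\|A^k\|_2)$ lies in $l_1$, and then automatically in $l_2$ since $l_1 \subset l_2$ for bounded sequences starting at $k=0$. Setting $\alpha = \sum_k \|A^k\|_2$ and $\beta = \sum_k \|A^k\|_2^2$, both finite, I would bound the cross sums by
\begin{equation*}
\sum_k \|x_k\|_2 \|y_k\|_2 \leq \|\mathbf{x}\|_\infty \, \|y_0\|_2 \, \alpha,
\end{equation*}
and analogously for the $w$-term, while $\sum_k \|y_k\|_2^2 \leq \beta \|y_0\|_2^2$. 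Collecting, the total bound reads $2\alpha(\|\mathbf{x}\|_\infty + \|\mathbf{w}\|_\infty)\|y_0\|_2 + \beta\|y_0\|_2^2$, and applying $a+b \leq 2\max\{a,b\}$ converts this into the stated form with $C = 2\max\{2\alpha,\beta\}$, a constant depending only on $A$.

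There is no genuine obstacle: once the key summability fact from Proposition \ref{prop:A-l1} is in hand, the argument is mechanical bookkeeping. The only structural subtlety worth noting is that the pure-$y$ block contributes a term quadratic in $\|y_0\|_2$ while the mixed blocks contribute terms linear in $\|y_0\|_2$ but also involving $\|\mathbf{x}\|_\infty$ or $\|\mathbf{w}\|_\infty$; this asymmetry is exactly why the conclusion is phrased as a maximum of two qualitatively different quantities rather than a single product.
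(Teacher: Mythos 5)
Your proof is correct and follows essentially the same route as the paper: expand the difference into the cross terms $y_k x_k^*$, $y_k w_k^*$ (and adjoints) plus the quadratic term $y_k y_k^*$, apply the triangle inequality with $\|ab^*\|_F=\|a\|_2\|b\|_2$, substitute $y_k=A^k y_0$, and invoke the $l_1$-summability of $\|A^k\|_2$ from Proposition \ref{prop:A-l1}. The only cosmetic difference is that the paper keeps the cross term as a single rank-one block $\bigl[\begin{smallmatrix}y_k\\0\end{smallmatrix}\bigr]\bigl[\begin{smallmatrix}x_k\\w_k\end{smallmatrix}\bigr]^*$ bounded by $\|y_k\|_2\sqrt{\|x_k\|_2^2+\|w_k\|_2^2}$ rather than splitting it block-wise, which changes nothing substantive.
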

\begin{proof}
Notice that,
\BEAS
&&\left\|\Lambda(\mathbf{x} + \mathbf{y},\mathbf{w}) - \Lambda(\mathbf{x},\mathbf{w})\right\|_F\\
&=& \left\|\sum_{k=0}^{\infty}\begin{bmatrix}{x}_k+y_k\\w_k\end{bmatrix}\begin{bmatrix}{x}_k+y_k\\w_k\end{bmatrix}^* - \begin{bmatrix}{x}_k\\w_k\end{bmatrix}\begin{bmatrix}{x}_k\\w_k\end{bmatrix}^*\right\|_F\\
&=& \left\|\sum_{k=0}^{\infty}\begin{bmatrix}y_k\\0\end{bmatrix}\begin{bmatrix}x_k\\w_k\end{bmatrix}^* + \begin{bmatrix}x_k\\w_k\end{bmatrix}\begin{bmatrix}y_k\\0\end{bmatrix}^*+ \begin{bmatrix}y_k\\0\end{bmatrix}\begin{bmatrix}y_k\\0\end{bmatrix}^*\right\|_F\\
&\leq& \sum_{k=0}^{\infty}2\left\|\begin{bmatrix}y_k\\0\end{bmatrix}\begin{bmatrix}x_k\\w_k\end{bmatrix}^*\right\|_F + \left\|\begin{bmatrix}y_k\\0\end{bmatrix}\begin{bmatrix}y_k\\0\end{bmatrix}^*\right\|_F\\
&=& \sum_{k=0}^{\infty}2\|y_k\|_2 \sqrt{\|x_k\|_2^2 + \|w_k\|_2^2} + \|y_k\|_2^2.
\EEAS
Since $\mathbf{x}, \mathbf{w} \in l_2 \subset l_2$, we have
\BEAS
\sqrt{\|x_k\|_2^2 + \|w_k\|_2^2} \leq |\mathbf{x}\|_{\infty} + \|\mathbf{w}\|_{\infty},
\EEAS
for all $k$.
Moreover, since $y_k = A^ky_0$, we have
\BEAS
&&\left\|\Lambda(\mathbf{x} + \mathbf{y},\mathbf{w}) - \Lambda(\mathbf{x},\mathbf{w})\right\|_F\\
&\leq& \sum_{k=0}^{\infty}2(|\mathbf{x}\|_{\infty} + \|\mathbf{w}\|_{\infty})\|A^ky_0\|_2 + \|A^ky_0\|_2^2\\
&\leq& \sum_{k=0}^{\infty} 2(|\mathbf{x}\|_{\infty} + \|\mathbf{w}\|_{\infty}) \|A^k\|_2\|y_0\|_2  + \|A^k\|_2^2\|y_0\|_2^2\\
&\leq&  C\max\{(|\mathbf{x}\|_{\infty} + \|\mathbf{w}\|_{\infty}) \|y_0\|_{2}, \|y_0\|_{2}^2\}
\EEAS
where $C = \left(\sum_{k=0}^{\infty}2\|A^k\|_2 + \|A^k\|_2^2\right)$. 
Since $\|A^k\|_2 \in l_1 \subset l_2$, $C < \infty$, and this concludes the proof.
\end{proof}

\subsection{Main result}
Now we are ready to prove main results of this paper.
The main idea of the proof is as follows.

Any rank one matrix in $\mathcal{C}$ can be generated by a sinusoid $\mathbf{w}$, but a sinusoid is not in $l_2$.
Therefore we find a signal in $l_2$ which approximates this sinusoid.
This is not surprising, because in $\mathcal{H}_{\infty}$ analysis \cite{doyle1992feedback}, the so called worst case signal is sinusoid which is not in $l_2$, so one has to approximate this sinusoid using $l_2$ and the supremum is not achieved.
More fundamental reason for this is due to non-compactness of an unit sphere in $l_2$, but we will not elaborate this point.

For a full rank matrix in $\mathcal{C}$, we firstly decompose this matrix to rank one matrices using the Lemma \ref{lemma:rank1-1}, 
then approximate each rank one matrices by a signal in $l_2$.
Finally, we pad them together to approximate a full rank matrix as in \cite{shamma1994robust}.

\begin{proposition}
Suppose $V \in \mathcal{C}$, and $\rank{V} \leq 1$.
Then for all $\varepsilon > 0$, there exists $\mathbf{w}$ with a finite number of non-zero entries such that
\begin{align}
&\|\Lambda(\mathbf{Mw},\mathbf{w}) - V\|_F < \varepsilon \label{eq:small-error}\\
&\Lambda(\mathbf{w}) = \begin{bmatrix} 0_{m,n} & I_{m}\end{bmatrix} V \begin{bmatrix} 0_{m,n} & I_{m}\end{bmatrix}^*  \label{eq:exactw}
\end{align}
\label{prop:rank1case}
\end{proposition}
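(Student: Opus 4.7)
The plan is to reduce the rank-one case to an approximation by a truncated, rescaled sinusoidal input. First I would parametrize: since $V \succeq 0$ and $\rank V \le 1$, write $V = vv^*$ with $v = \begin{bmatrix}\xi\\ \eta\end{bmatrix}$, $\xi \in \mathbb{C}^n$, $\eta \in \mathbb{C}^m$. The defining equality of $\mathcal{C}$ specializes to $(A\xi+B\eta)(A\xi+B\eta)^* = \xi\xi^*$, so by Lemma \ref{lemma:ra} there is a $1\!\times\!1$ unitary $e^{j\theta}$ with $A\xi + B\eta = e^{j\theta}\xi$. Since $\rho(A)<1$, no eigenvalue of $A$ has modulus $1$, so $(e^{j\theta}I - A)$ is invertible and $\xi = (e^{j\theta}I-A)^{-1}B\eta$. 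The degenerate subcase $\eta = 0$ forces $A\xi = e^{j\theta}\xi$, hence $\xi = 0$ by stability, so $V = 0$ and $\mathbf{w} \equiv 0$ works; I assume $\eta \neq 0$ from now on.

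The candidate signal is a truncated, $1/\sqrt{N}$-scaled sinusoid: pick a positive integer $N$ and set $w_k = \tfrac{1}{\sqrt{N}} e^{jk\theta}\eta$ for $0 \le k < N$ and $w_k = 0$ for $k \ge N$. A direct computation gives
\[
\Lambda(\mathbf{w}) = \sum_{k=0}^{N-1}\tfrac{1}{N}\eta\eta^* = \eta\eta^* = \begin{bmatrix}0 & I\end{bmatrix} V \begin{bmatrix}0 & I\end{bmatrix}^*,
\]
so \eqref{eq:exactw} holds exactly for every $N$. Using $A\xi + B\eta = e^{j\theta}\xi$ and $x_0 = 0$, the standard steady-state-plus-transient decomposition yields $x_k = \tfrac{1}{\sqrt{N}}(e^{jk\theta}\xi - A^k \xi)$ for $0 \le k \le N$ and, since the input vanishes afterwards, $x_k = A^{k-N}x_N$ for $k \ge N$, with $x_N = \tfrac{1}{\sqrt{N}}(e^{jN\theta}I - A^N)\xi$.

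To bound $\|\Lambda(\mathbf{Mw},\mathbf{w}) - V\|_F$, for $0 \le k < N$ write $\begin{bmatrix}x_k\\ w_k\end{bmatrix} = \tfrac{1}{\sqrt{N}}(e^{jk\theta}v - \tilde y_k)$ with $\tilde y_k = \begin{bmatrix}A^k\xi\\ 0\end{bmatrix}$. Expanding the outer product and summing,
\[
\sum_{k=0}^{N-1}\begin{bmatrix}x_k\\ w_k\end{bmatrix}\begin{bmatrix}x_k\\ w_k\end{bmatrix}^* = vv^* - \frac{1}{N}\sum_{k=0}^{N-1}\!\left(e^{jk\theta} v\tilde y_k^* + e^{-jk\theta}\tilde y_k v^*\right) + \frac{1}{N}\sum_{k=0}^{N-1}\tilde y_k\tilde y_k^*.
\]
Each correction sum has Frobenius norm dominated by $\|v\|\|\xi\|\sum_{k\ge 0}\|A^k\|_2$ or $\|\xi\|^2\sum_{k\ge 0}\|A^k\|_2^2$, both finite by Proposition \ref{prop:A-l1}, so this contributes $O(1/N)$. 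The tail contribution from $k \ge N$ is $\bigl\|\sum_{k\ge N} x_k x_k^*\bigr\|_F \le \|x_N\|_2^2 \sum_{k\ge 0}\|A^k\|_2^2$, and $\|x_N\|_2^2 \le \tfrac{(1+\|A^N\|_2)^2\|\xi\|_2^2}{N}$ is also $O(1/N)$. Summing, $\|\Lambda(\mathbf{Mw},\mathbf{w}) - V\|_F = O(1/N)$, and picking $N$ large enough secures \eqref{eq:small-error}. The support of $\mathbf{w}$ is the finite set $\{0,\ldots,N-1\}$.

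The conceptual obstacle is not the ansatz but balancing two competing demands at once: the scaling $1/\sqrt{N}$ is \emph{forced} by the exact identity $\Lambda(\mathbf{w}) = W$, and one must then verify that this same scaling is enough to suppress both the homogeneous cross-terms inside $[0,N)$ and the free-response tail on $[N,\infty)$. This is precisely where Schur stability of $A$ enters through the $l_1$ summability of $\|A^k\|_2$; without it the $O(1/N)$ bound would collapse. The rest is bookkeeping, and the construction is manifestly constructive, matching up with the full-rank extension sketched before the statement via the rank-one decomposition of Proposition \ref{lemma:rank1-1} and time-shifted concatenation of the approximants.
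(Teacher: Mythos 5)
Your proof is correct and follows essentially the same route as the paper's: Rantzer's lemma (Lemma \ref{lemma:ra}) to extract the frequency $e^{j\theta}$, the same truncated $1/\sqrt{N}$-scaled sinusoidal input, the same steady-state-plus-transient split of $\mathbf{x}$, and the same $O(1/N)$ bookkeeping via summability of $\|A^k\|_2$. The only cosmetic difference is that you expand the Gramian cross-terms directly where the paper routes them through Proposition \ref{prop:A-perturbation}, which is proved by the identical expansion.
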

\begin{proof}
Suppose $\rank{V} = 0$. Then $V = 0 \in \mathcal{C}$, and $\mathbf{w} = 0$ satisfies \eqref{eq:small-error} and \eqref{eq:exactw}.

Now suppose $\rank{V} = 1$. We will construct $\mathbf{w}$ which satisfies \eqref{eq:small-error} and  \eqref{eq:exactw}.
Since $\rank{V} = 1$, there exists $x_s \in \mathbb{C}^n, w_s \in \mathbb{C}^m$ such that $V = \begin{bmatrix} x_s \\ w_s \end{bmatrix} \begin{bmatrix} x_s \\ w_s \end{bmatrix}^*$.
Moreover, by defining $f = \begin{bmatrix} A & B\end{bmatrix}\begin{bmatrix} x_s \\ w_s \end{bmatrix}$, $g = \begin{bmatrix} I_n & 0_{n,m}\end{bmatrix}\begin{bmatrix} x_s \\ w_s \end{bmatrix}$, we can easily see that $ff^* = gg^*$.
Therefore from the Lemma \ref{lemma:ra}, there exists $\theta$ such that $e^{j\theta} x_s = Ax_s + Bw_s$.

Now, for a given $N \in \mathbb{N}$, define $\mathbf{w}$
\BEAS
w_k &=& \begin{cases}
	\frac{1}{\sqrt{N}} e^{j\theta k} w_{s} & \text{if $0 \leq k < N$}\\
	0 & \text{if $N \leq k$}.
	\end{cases}
\EEAS
It is easy to see that $\Lambda(\mathbf{w}) = \sum_{k=0}^{\infty} w_kw_k^* = \sum_{k=0}^{N-1} w_kw_k^* = w_sw_s^*$.
and therefore $\mathbf{w}$ satisfies \eqref{eq:exactw}.
% for any $N$.
In order to obtain $\mathbf{x} = \mathbf{Mw}$, let us define the following signal $\mathbf{s}, \mathbf{t}$
\BEAS
s_k &=&  \begin{cases}
	\frac{1}{\sqrt{N}} e^{j\theta_k} x_{s} & \text{if $0 \leq k < N$}\\
	\frac{1}{\sqrt{N}} A^{k-N}e^{j\theta N}x_s & \text{if $N \leq k$},
	\end{cases}\\
t_k &=& -\frac{1}{\sqrt{N}} A^kx_{s}
\EEAS
then $\mathbf{x} = \mathbf{s}+\mathbf{t}$.
Notice that
\BEAS
\Lambda_N(\mathbf{s}, \mathbf{w})
= \sum_{k=0}^{N-1}\begin{bmatrix} s_k\\w_k\end{bmatrix}\begin{bmatrix} s_k\\w_k\end{bmatrix}^* 
= \frac{1}{N}\sum_{k=0}^{N-1}\begin{bmatrix}x_s\\w_s\end{bmatrix}\begin{bmatrix}x_s\\w_s\end{bmatrix}^*
= V
\EEAS
and this shows
\BEAS
&&\|\Lambda(\mathbf{s},\mathbf{w}) -V\|_F
= \left\|\sum_{k=N}^{\infty} \begin{bmatrix}s_k\\w_k\end{bmatrix}\begin{bmatrix}s_k\\w_k\end{bmatrix}^*\right\|_F\\
&\leq& \frac{1}{N} \sum_{k=0}^{\infty} \left\|\begin{bmatrix}A^k x_s\\0\end{bmatrix}\begin{bmatrix}A^k x_s\\0\end{bmatrix}^*\right\|_F
\leq \frac{1}{N} \sum_{k=0}^{\infty} \|A^kx_s\|_2^2\\
&\leq& \underbrace{\sum_{k=0}^{\infty} \|A^k\|_2^2}_{C_1} \frac{\|x_s\|_2^2}{N}.
\EEAS
Notice that $C_1 < \infty$ because of the Proposition \ref{prop:A-l1}.

Now using the triangle inequality and the Proposition \ref{prop:A-perturbation}, we obtain
\BEAS
&&\|\Lambda(\mathbf{x},\mathbf{w}) - V\|_F 
= \|\Lambda(\mathbf{s}+\mathbf{t},\mathbf{w}) - V\|_F\\
&\leq& \|\Lambda(\mathbf{s},\mathbf{w}) - V\|_F + \|\Lambda(\mathbf{s}+\mathbf{t},\mathbf{w}) - \Lambda(\mathbf{s},\mathbf{w})\|_F\\
&\leq& C_1\frac{\|x_s\|_2^2}{N} + C\max\{(|\mathbf{s}\|_{\infty} + \|\mathbf{w}\|_{\infty})\|t_0\|_{2}, \|t_0\|_{2}^2\}
\EEAS
Since $\|\mathbf{s}\|_{\infty} = \frac{1}{\sqrt{N}} \|x_s\|$, $\|\mathbf{w}\|_{\infty} =\frac{1}{\sqrt{N}} \|w_s\|$, $\|t_0\|_2 = \frac{1}{\sqrt{N}} \|x_s\|$, we have
$
\max\{(|\mathbf{s}\|_{\infty} + \|\mathbf{w}\|_{\infty})\|t_0\|_{2}, \|t_0\|_{2}^2\} 
= \frac{1}{N} \|x_s\|_2(\|x_s\|_2 + \|w_s\|_2)$.
By combining all these bounds, we can conclude that there exists a constant $C_3$ only depends on $A, x_s, w_s$ such that
\BEAS
\|\Lambda(\mathbf{x},\mathbf{w}) - V\|_F  \leq \frac{C_3}{N},
\EEAS
therefore by taking sufficiently large $N$, we can make $\mathbf{w}$ satisfy \eqref{eq:small-error}, and clearly $\mathbf{w}$ has $N$ number of non-zero entries.
\end{proof}

Now combining all these results, we are ready to prove our main result, the Proposition \ref{prop:approximation}.
\begin{proof}[Proof of the Proposition \ref{prop:approximation}]
From the Lemma \ref{lemma:rank1-1}, we can decompose $V = \sum_{i=1}^{n+m} V_i$ where $V_i \in \mathcal{C}$, and $\rank{V_i} \leq 1$.
Let us rearrange these terms, so that $V = \sum_{i=1}^r V_i$ where $\rank{V_i} = 1$.
We now use an induction on $r$.
Suppose $r \leq 1$, then from the Proposition \ref{prop:rank1case}, the proof is done.

Now assume the induction hypothesis holds, that is for $\sum_{i=1}^{r-1}V_i \in \mathcal{C}$, there exists ${\tilde{\mathbf{w}}}$ with a finite number of non-zero entries 
such that
\BEAS
&&\left\|\Lambda(\mathbf{M\tilde{w}},\mathbf{\tilde{w}})- \sum_{i=1}^{r-1}V_i \right\|_F < \frac{1}{4}\varepsilon,\\
&&\Lambda(\mathbf{\tilde{w}}) = \sum_{i=1}^{r-1} \begin{bmatrix} 0_{m,n} & I_{m}\end{bmatrix} V_i \begin{bmatrix} 0_{m,n} & I_{m}\end{bmatrix}^* 
\EEAS
Similarly, for $V_r$, there exists $\hat{\mathbf{w}}$ with a finite number of non-zero entries such that
\BEAS
&&\left\|\Lambda(\mathbf{M\hat{w}},\mathbf{\hat{w}})- V_r\right\|_F < \frac{1}{4}\varepsilon,\\
&&\Lambda(\mathbf{\hat{w}}) =\begin{bmatrix} 0_{m,n} & I_{m}\end{bmatrix} V_r \begin{bmatrix} 0_{m,n} & I_{m}\end{bmatrix}^* 
\EEAS
satisfies  with $\varepsilon/4$ and 2.
In addition, since ${V_r} \in \mathcal{C}$ and $\rank{V_r} = 1$, from the Proposition \ref{prop:rank1case}, there exists finite length $\hat{\mathbf{w}}$ satisfies 1 with $\varepsilon/4$ and 2.

Let $T \in \mathbb{N}$ such that $\tilde{w}_k = 0$ for all $k \geq T$.
From the Proposition \ref{prop:finiteapproximiation}, we can find $N_1$ such that
\BEAS
\left\|\Lambda(\mathbf{M\tilde{w}},\mathbf{\tilde{w}})-\Lambda_{n+T}(\mathbf{M\tilde{w}},\mathbf{\tilde{w}})\right\| < \frac{1}{4}\epsilon,
\EEAS
for all $n \geq N_1$.

Consider the following signal $\mathbf{w}$ 
\BEAS
w_k = \begin{cases}
\tilde{w}_k & \text{if $0 \leq k < N+T$}\\
\hat{w}_{k-N-T} & \text{if $N+T \leq k$},
\end{cases}
\EEAS
where $N \geq N_1$.
Clearly, $\mathbf{w}$ has a finite number of non-zero entries, and $\Lambda(\mathbf{w}) = \Lambda(\mathbf{\tilde{w}}) +  \Lambda(\mathbf{\hat{w}})$, which shows 
\BEAS
\Lambda(\mathbf{w}) =\begin{bmatrix} 0_{m,n} & I_{m}\end{bmatrix} V \begin{bmatrix} 0_{m,n} & I_{m}\end{bmatrix}^*.
\EEAS

Now, let $\mathbf{\tilde{x}} = \mathbf{M\tilde{w}}$, $\mathbf{\hat{x}} = \mathbf{M\hat{w}}$, and $\mathbf{x} = \mathbf{Mw}$.
Then,
\BEAS
x_k  = \begin{cases}
\tilde{x}_k & \text{if $0 \leq k < T$}\\
A^{k-T} \tilde{x}_T & \text{if $T \leq k < N+T$}\\
\hat{x}_{k-N-T} + A^{k-T}\tilde{x}_T & \text{if $N+T \leq k$}.
\end{cases}
\EEAS

Notice that
\BEAS
\Lambda(\mathbf{x},\mathbf{w}) &=&
\Lambda_{N+T}(\mathbf{\tilde{x}},\mathbf{\tilde{w}}) \\
&&+ \sum_{k=0}^{\infty}
\begin{bmatrix}
\hat{x}_{k} + A^{k}A^{N}\tilde{x}_T\\
\hat{w}_{k}
\end{bmatrix}
\begin{bmatrix}
\hat{x}_{k} + A^{k}A^{N}\tilde{x}_T\\
\hat{w}_{k}
\end{bmatrix}^*\\
&=&
\Lambda_{N+T}(\mathbf{\tilde{x}}, \mathbf{\tilde{w}})
+ \Lambda(\mathbf{\hat{x}} + \mathbf{y}, \mathbf{\hat{w}}),
\EEAS
where $y_k = A^{k}A^{N}\tilde{x}_T$.

Therefore,
\BEAS
&&\|\Lambda(\mathbf{x}, \mathbf{w}) - V\|_F\\
&\leq& \left\|\Lambda_{N+T}(\mathbf{\tilde{x}}, \mathbf{\tilde{w}})-\sum_{i=1}^{r-1} V_i \right\|_F + \left\|\Lambda(\mathbf{\hat{x}} + \mathbf{y}, \mathbf{\hat{w}}) - V_r \right\|_F\\
&\leq& \left\|\Lambda_{N+T}(\mathbf{\tilde{x}}, \mathbf{\tilde{w}})-\Lambda(\mathbf{\tilde{x}}, \mathbf{\tilde{w}})\right\|_F 
+ \left\|\Lambda(\mathbf{\tilde{x}}, \mathbf{\tilde{w}})-\sum_{i=1}^{r-1} V_i \right\|_F\\
&&
+ \left\|\Lambda(\mathbf{\hat{x}} + \mathbf{y}, \mathbf{\hat{w}}) - \Lambda(\mathbf{\hat{x}}, \mathbf{\hat{w}}) \right\|_F
+ \left\|\Lambda(\mathbf{\hat{x}}, \mathbf{\hat{w}}) - V_r \right\|_F\\
&\leq& \frac{3}{4}\epsilon + C\max\{(|\mathbf{\hat{x}}\|_{\infty} + \|\mathbf{\hat{w}}\|_{\infty})\|y_0\|_{2}, \|y_0\|_{2}^2\}.
\EEAS
Notice that $\mathbf{\hat{x}}, \mathbf{\hat{w}}$ are given signals in $l_2$, and does not depend on our choice $N$.
However, since $\|y_0\|_2 = \|A^N\|\|\tilde{x}\|_T$ can be made arbitrarily small by taking $N \rightarrow \infty$,
and this concludes the proof.
\end{proof}

%\subfile{../chapters/main_proof_short}
%\section{Extension to a polynomial system}
%\subfile{../chapters/poly}

\bibliographystyle{IEEEtran}
\bibliography{IEEEabrv,gkyp}

\end{document}